\documentclass[aps,prx,10pt,twocolumn,superscriptaddress,amsmath,amssymb,floatfix,longbibliography]{revtex4-2}
\usepackage{graphicx}
\usepackage{amsthm}
\usepackage[hidelinks]{hyperref}
\usepackage[dvipsnames]{xcolor}
\newcommand{\ket}[1]{\left|#1\right\rangle}
\newcommand{\bra}[1]{\left\langle#1\right|}
\newcommand{\Ftwo}{\mathbb{F}_2}
\newcommand{\e}{\mathrm{e}}

\DeclareMathOperator{\im}{im}
\DeclareMathOperator{\rank}{rank}

\DeclareMathOperator{\supp}{supp}
\DeclareMathOperator*{\argmin}{arg\,min}
\DeclareMathOperator*{\argmax}{arg\,max}
\newtheoremstyle{plainupright}
  {\topsep}{\topsep}
  {\normalfont}
  {}
  {\normalfont\bfseries}
  {.}{0.5em}{}
\theoremstyle{plainupright}
\newtheorem{definition}{Definition}
\newtheorem{lemma}{Lemma}
\newtheorem{theorem}{Theorem}
\newtheorem{proposition}{Proposition}

\renewenvironment{proof}[1][Proof]{\par\pushQED{\qed}\normalfont
\noindent #1.\enspace\ignorespaces}{\popQED\par}

\begin{document}
\title{Single-shot state preparation threshold: rigorous theorem and statistical mechanical mapping}
\date{\today}

\author{Yuanchen Zhao}
\thanks{These authors contributed equally to this work.}
\affiliation{State Key Laboratory of Low Dimensional Quantum Physics, Department of Physics, Tsinghua University, Beijing, 100084, China}
\affiliation{Frontier Science Center for Quantum Information, Beijing 100184, China}

\author{Yi Yuan}
\thanks{These authors contributed equally to this work.}
\affiliation{State Key Laboratory of Low Dimensional Quantum Physics, Department of Physics, Tsinghua University, Beijing, 100084, China}
\affiliation{Frontier Science Center for Quantum Information, Beijing 100184, China}

\author{Zhengyi Han}
\affiliation{Yau Mathematical Sciences Center, Tsinghua University, Beijing 100084, China}

\author{Dong E. Liu}
\email{Corresponding to: dongeliu@mail.tsinghua.edu.cn}
\affiliation{State Key Laboratory of Low Dimensional Quantum Physics, Department of Physics, Tsinghua University, Beijing, 100084, China}
\affiliation{Frontier Science Center for Quantum Information, Beijing 100184, China}
\affiliation{Beijing Academy of Quantum Information Sciences, Beijing 100193, China}
\affiliation{Hefei National Laboratory, Hefei 230088, China}

\begin{abstract}
Preparing logical codewords with a single round of noisy syndrome measurements can reduce the time overhead of fault-tolerant quantum computation. It remains unclear which structural properties of the code suffice to guarantee a threshold for single-shot state preparation. Here we prove that linear confinement suffices for a nonzero single-shot state-preparation threshold for CSS quantum low-density parity-check codes, in contrast to the common belief that soundness is necessary. When the distance is at least logarithmic, we prove that below nonzero local stochastic readout and data error thresholds, the logical error probability after a formal ideal final recovery is exponentially small and approaches zero when the code size approaches infinity. Under the stronger assumption of linear soundness, we show that the residual data error of the preparation protocol is local stochastic, directly allowing composition with other fault-tolerant gadgets.  To investigate single-shot
state-preparation threshold beyond the above assumptions, we further develop a statistical mechanical mapping that expresses the final logical error rate in terms of a classical partition function. We perform Monte Carlo simulations of this model for the three-dimensional toric code. The numerical results suggest that sublinear soundness can also support a single-shot state-preparation threshold.
\end{abstract}
\maketitle
\section{Introduction}
\label{sec:intro}
{
Quantum error correction (QEC) protects logical information by spreading it
over many physical qubits and by measuring the stabilizer generators of the
code: the measurement outcomes, known as the syndrome, reveal where errors
have occurred without revealing, and therefore without disturbing, the encoded
state~\cite{shorSchemeReducing1995,steaneErrorCorrectingCodes1996,%
calderbankGoodQuantum1996,knillTheoryQuantumErrorcorrecting1997,%
gottesmanStabilizerCodes1997,terhalQuantumErrorCorrection2015}. Repeating
this cycle below a constant physical error rate keeps the logical information
alive for arbitrarily long
computations~\cite{aharonovFaultTolerantQuantum1997,%
knillResilientQuantum1998,dennisTopologicalQuantumMemory2002,%
kitaevFaultTolerantQuantum2003}, and quantum
low-density parity-check (qLDPC) codes achieve this at constant space
overhead~\cite{gottesmanFaultTolerantQuantumComputation2014,%
fawziConstantOverheadQuantum2018b,breuckmannQuantumLowDensity2021,%
hastingsFiberBundle2021,panteleevAsymptoticallyGood2022,%
leverrierQuantumTannerCodes2022,dinurGoodQuantum2023,%
bravyiHighThreshold2024}. A fault-tolerant computation also requires the
logical states to be prepared in the first place, which is a different task
from memory or gates: state preparation starts from an unencoded physical
state and must produce a valid encoded logical state while remaining robust
to errors during the process. One important approach, and the one considered
here, is to prepare the code state by measuring stabilizer generators of a
suitable physical input state and applying a correction conditioned on the
measurement outcomes.

When stabilizer measurements are used either for error correction or for state
preparation, their outcomes are themselves noisy. A standard way to obtain
fault tolerance is therefore to repeat the measurements and decode the
resulting spacetime syndrome history; in conventional error correction, every
generator is typically measured $O(d)$ times~\cite{dennisTopologicalQuantumMemory2002,%
raussendorfFaultTolerantQuantum2007,fowlerSurfaceCodes2012}. This suppresses
measurement errors, but multiplies the time overhead by the code distance,
which in the constant-space-overhead qLDPC setting reintroduces a
super-constant cost~\cite{gottesmanFaultTolerantQuantumComputation2014}.
Single-shot error correction~\cite{bombinSingleShotFaultTolerant2015} avoids
this repetition by exploiting structure in the code. In the original setting,
redundancy among the checks makes the syndrome itself a
codeword of a classical metacheck code. A single noisy measurement round can
then suffice for fault-tolerant error correction: measurement errors induce
only controlled residual data errors, without requiring a history of repeated
syndrome measurements~\cite{bombinSingleShotFaultTolerant2015,%
brownFaultTolerantErrorCorrection2016,campbellTheorySingleshotError2019,%
quintavalleSingleShotErrorCorrection2021,kubicaSingleShotQuantum2022,%
higgottImprovedSingleShot2023}. The number of syndrome-measurement rounds is
therefore $O(1)$, eliminating the growing time overhead of repeated syndrome
extraction. Good soundness was identified in
Ref.~\cite{campbellTheorySingleshotError2019} as a sufficient condition for
adversarial single-shot error correction. The weaker notion of confinement
was later introduced in Ref.~\cite{quintavalleSingleShotErrorCorrection2021};
for qLDPC families with
growing distance, good confinement gives adversarial single-shot correction,
while good linear confinement gives a sustainable threshold under local
stochastic noise.

The corresponding single-shot question for state preparation is distinct.
In error correction, the noisy syndrome is used to diagnose errors acting on
an already encoded state. In measurement-based preparation, by contrast, the
stabilizer measurement is itself part of the map from an unencoded input into
the code space, so a single-shot correction theorem for an existing encoded
state does not by itself establish fault-tolerant single-shot preparation.
The 3D gauge-color-code construction does provide a code-specific
example, applying single-shot correction to fault-tolerant initialization of
logical $X$- and $Z$-eigenstates from physical product
states~\cite{bombinSingleShotFaultTolerant2015}.
{ Confinement has nevertheless been regarded as
insufficient for direct single-shot state preparation, since noisy
initial measurements can produce large residual data errors with only small syndrome ~\cite{hongSingleshotPreparationHypergraph2025,xuBatchedHighRate2025}.
However, these arguments did not directly prevent the existence of the state preparation threshold.}

{ Our first result shows that linear confinement
suffices for a direct single-shot state-preparation threshold.
We consider a one-round syndrome measurement state preparation procedure of CSS qLDPC codes under local stochastic readout and data errors.
For the codes whose $Z$ (or $X$)-check
syndrome map has linear confinement up to correction weight
$\kappa n^\beta$, with constants $\kappa>0$ and $0<\beta\leq1$
independent of $n$, 
we construct syndrome and data decoders that give
a nonzero preparation threshold when the $X$ (or $Z$) distance is at
least logarithmic, below which a formal final recovery with ideal syndrome measurement yields exponentially small logical error probability. Moreover, under the stronger condition of linear soundness up to syndrome weights
proportional to $n$, minimum-weight decoding guarantees the residual error of the state preparation process is local
stochastic, straightforwardly allowing composition with other fault-tolerant gadgets. 
}

Our second result is a statistical mechanical mapping of the same
single-shot preparation problem under stochastic noise. Assuming maximum-likelihood final recovery, we construct an exact representation of the logical error rate in terms of the partition function of a classical statistical model. The error threshold is detected using a generalized Wilson loop order parameter. This
construction extends the standard relation between QEC thresholds and
disordered spin or gauge models~\cite{dennisTopologicalQuantumMemory2002} to
the joint effect of measurement noise and data noise in the single-shot state preparation scheme.
Related statistical mechanical mapping methods have also been used to study finite-depth weak
measurement protocols~\cite{zhu_nishimoris_2023}.

Several recent works have also studied the state-preparation problem.
Ref.~\cite{xuBatchedHighRate2025} proposed methods that concatenate a batch of
qLDPC codes with a classical expander code to achieve the overall soundness
property. However, their proof establishes
only that the single-element residual error probability is exponentially
small, which is weaker than the local stochastic error set bound, thus is not enough to guarantee a state preparation threshold.
Here, we rigorously proof the local stochastic residual error set bound for linear soundness as well as the exponentially small logical error rate bound for linear confinement.
Moreover, our result applies to more flexible one-copy state
preparation in contrast to the batched protocol.
Refs.~\cite{hongSingleshotPreparationHypergraph2025,bergamaschiFaultTolerantSingleshot2025}
proposed methods for preparing logical codewords for arbitrary qLDPC codes by
trading off super-constant space overhead against the time overhead of
repetitive syndrome measurements. In contrast, we study single-shot state
preparation with constant spacetime overhead.
Ref. \cite{tan2025singleshotuniversalityquantumldpc,golowich2025constantoverheadaddressablegatessingleshot} also discussed the single-shot state preparation threshold. {Their result requires explicit construction of metachecks with linear soundness, while we focus on the theorem based linear confinement alone, without requiring further metacheck construction}, as well as developing a numerical simulation method under weaker conditions. 

The paper is organized as follows.
Section~\ref{sec:preliminaries} defines notations, introduces the protocol and the
noise model;
Sec.~\ref{sec:threshold} states and discusses the threshold theorem;
Sec.~\ref{sec:smmapping} develops the statistical mechanical mapping, introduce the order parameter and present the
numerical results; and Sec.~\ref{sec:conclusions} discusses open directions.
Proofs and construction details are collected in the appendices.}

\section{Preliminaries}
\label{sec:preliminaries}

\subsection{CSS codes and decoding}
{
A CSS code is specified by binary parity-check matrices $H_X$ and $H_Z$,
whose rows record the supports of the $X$ and $Z$ stabilizer generators.
We work over the binary field $\Ftwo$, with addition modulo two.
Commutation of the stabilizer generators requires $H_XH_Z^T=0$.
The matrix $H_Z^T$ maps a selection of $Z$ checks to the qubit support
of their product, and $H_X$ maps this support to the syndrome measured
by the $X$ checks. These consecutive maps therefore compose to zero,
which is the defining property of a chain complex.
Their transposes give the corresponding cochain complex:
\begin{equation}
\begin{aligned}
 C_2&\xrightarrow{\partial_2=H_Z^T}C_1
       \xrightarrow{\partial_1=H_X}C_0,\\
 C^2&\xleftarrow{\delta_1=H_Z}C^1
       \xleftarrow{\delta_0=H_X^T}C^0.
\end{aligned}
\end{equation}
The chosen bases identify $C_1$ and $C^1$ with the $n$ physical qubits,
$C_2$ and $C^2$ with the $r_Z$ measured $Z$ checks, and $C_0$ and
$C^0$ with the $r_X$ measured $X$ checks. With
$r_X'=\rank H_X$ and $r_Z'=\rank H_Z$, the number of logical qubits is
$k=n-r_X'-r_Z'$. We assume $k\geq1$ when defining the order parameter.
For a binary vector $e$, we write $X_e$ and $Z_e$ for the products
of the corresponding single-qubit Pauli operators over $\supp(e)$.

Logical $Z$ operators are represented by homology classes and logical
$X$ operators by cohomology classes:
\begin{equation}
\begin{aligned}
 Z_1&=\ker H_X, & B_1&=\im H_Z^T, & H_1&=Z_1/B_1,\\
 Z^1&=\ker H_Z, & B^1&=\im H_X^T, & H^1&=Z^1/B^1.
\end{aligned}
\end{equation}
In this notation, $H_Ze$ is the syndrome of an $X$ error $e$.
A valid syndrome belongs to $\im H_Z$, whereas a binary measurement
record need not belong to this space. The relations among the measured
checks are specified by $\ker H_Z^T$. Equivalently, the valid syndromes
form a classical linear code within $C^2$, whose parity
constraints are metachecks. We write $|e|$ for Hamming weight and also
use $e$ for its support when forming intersections or containment events.
The $X$ and $Z$ distance of the code is defines as
\begin{equation}
\begin{aligned}
    &d_X=\min_{l^*\in Z^1\setminus B^1}|l^*|, \qquad d_Z=\min_{l\in Z_1\setminus B_1}|l|.
\end{aligned}
\end{equation}

Consider correct $X$ data error with $Z$ check. For each valid syndrome $\sigma$, a general data decoder takes the information of syndrome configuration and returns a correction
$c(\sigma)$, represented by a binary vector on the qubits:
\begin{equation}
\label{eq:data-decoder}
 c:\im H_Z\longrightarrow C^1,\qquad
 H_Zc(\sigma)=\sigma,\qquad c(0)=0.
\end{equation}
We use $X_{c(\sigma)}$ to denote the corresponding Pauli  $X$ correction operator of $c(\sigma)$. Multiplying $X_{c(\sigma)}$ by an $X$ stabilizer has no effect on a code state,
whereas multiplying it by a logical $X$ operator can change the logical outputs. 
Multiplying $X_{c(\sigma)}$ by an $X$ stabilizer has no effect on a code state,
whereas multiplying it by a logical $X$ operator can change the extension to
arbitrary logical inputs. The decoders for $Z$ data errors have the similar properties. Notice that in realistic fault-tolerance procedures, Pauli corrections can be tracked
classically in a Pauli frame rather than applied
immediately~\cite{knillQuantumComputingRealistically2005}.
}

\subsection{Single-shot state preparation and noise model}

Since the $X$ and $Z$ sectors decouple for CSS codes, without loss of generality, we assume to begin with the $\ket{+}^{\otimes n}$ state and measure all $Z$ generators
once, and consider stochastic $Z$ readout error and $X$ data error. $Z$ data errors during this process could be corrected through the standard QEC procedure. In the absence of readout noise, the outcome $\sigma\in\im H_Z$
is uniformly distributed, with probability $2^{-r_Z'}$. Indeed, every nonidentity
product of independent $Z$ generators has zero expectation in the
initial product state. The outcomes of the $r_Z'$ independent generators
therefore have equal probability. Conditional on $\sigma$, the post-measurement state is
$X_{c(\sigma)}\ket{+_L}$, for an arbitrary decoder $c$, where $\ket{+_L}$ is the code state with
eigenvalue $+1$ for every logical $X$ generator.

We model readout errors by a binary random vector $S_e\in C^2$,
independent of the ideal measurement outcome, with probability
$\Pr(S_e=s_e)=q(s_e)$. The observed record is $s=\sigma+s_e$.
The joint state of the record and the data after measurement is therefore
\begin{equation}
\label{eq:postmeasurement}
\begin{aligned}
 \rho_Q={}&\frac{1}{2^{r_Z'}}
 \sum_{s\in C^2}\sum_{\sigma\in\im H_Z}q(\sigma+s)
 \ket{s}\bra{s}\\
 &\quad\otimes X_{c(\sigma)}\ket{+_L}\bra{+_L}X_{c(\sigma)}.
\end{aligned}
\end{equation}
Tracing out the data gives the distribution of the observed record,
\begin{equation}
\label{eq:record-law}
 \Pr(s)=\frac{1}{2^{r_Z'}}\sum_{\sigma\in\im H_Z}q(\sigma+s).
\end{equation}
The sum accounts for every valid syndrome compatible with the noisy
record, including the redundant-check constraints.

Throughout this paper, we consider the phenomenological noise model in which the readout error and data error are described separately. 
For the threshold theorem, we consider local stochastic readout noise
followed by independent local stochastic Pauli $X$ data noise.

\begin{definition}[Local stochastic error]
\label{def:local-stochastic}
A random variable of data error configuration $E$ is $(p)$-local stochastic if, for every fixed error configuration $e$,
\begin{equation}
 \Pr(e\subseteq E)\leq p^{|e|}. \label{eq:local-stochastic-error}
\end{equation}
The same definition applies to a readout error configuration $S_e$ on check bits, with
parameter $q$.
\end{definition}

The definition allows correlations within the data errors or within the
readout errors. The bound applies to the probability that every qubit
of a specified set $\supp (e)$ is noisy, whether or not there are additional errors
elsewhere. Independent Bernoulli errors are a special case in which the equality in Eq. \eqref{eq:local-stochastic-error} is taken.

\subsection{Effective error channel}

Readout errors can move the observed record outside the space of valid
syndromes, i.e. $\im H_z$. We therefore need to first decode syndrome errors before applying a data
correction.
For the case of linear soundness, i.e. part~1 of Theorem~\ref{thm:single-shot-preparation},
we choose the minimum-weight decoder:
\begin{equation}
\label{eq:syndrome-decoder}
 r_{\rm min}(s)\in\argmin_{s'\in C^2:\,s+s'\in\im H_Z}|s'|,
 \qquad \pi(s)=s+r(s).
\end{equation}
{ We choose one representative for each coset, so that
$r(s+\sigma)=r(s)$ for every valid syndrome $\sigma$.
For the case of linear confinement, i.e. Part~2 of Theorem~\ref{thm:single-shot-preparation}, we use the syndrome decoder that replaces the Hamming weight above with a variant of closeness weight \cite{quintavalleSingleShotErrorCorrection2021}, which is defined in
Appendix~\ref{app:polynomial-threshold}.}
Applying $X_{c(\pi(s))}$ to the post-measurement state changes its syndrome
to $s_e+r(s)$. Since $r(s)=r(s_e)$, the residual syndrome is the
random variable
\begin{equation}
\label{eq:residual-syndrome}
 \Sigma=\pi(S_e)=S_e+r(S_e).
\end{equation}
Its distribution is
\begin{equation}
\label{eq:residual-syndrome-law}
 \Pr(\Sigma=\sigma)
 =\frac{1}{2^{r_Z'}}\sum_{s\in C^2}q(\sigma+r(s)).
\end{equation}
Each syndrome coset contributes $2^{r_Z'}$ identical terms to the sum.
After we discard the record, the prepared state is described by the
effective error channel $\mathcal S$:
\begin{equation}
\label{eq:preparation-channel}
\begin{aligned}
 \rho_{\rm final}&=\mathcal S(\ket{+_L}\bra{+_L}),\\
 \mathcal S(\rho)&=\sum_{\sigma\in\im H_Z}\Pr(\Sigma=\sigma)
 X_{c(\sigma)}\rho X_{c(\sigma)}.
\end{aligned}
\end{equation}

Equation~\eqref{eq:preparation-channel} also defines a Pauli channel on
arbitrary encoded inputs. Let $V$ be an isometric encoding into the code
space. Then $X_{c(\sigma)}V$ encodes the same logical input into the
subspace with syndrome $\sigma$, and the corresponding noisy encoding
is $\mathcal S(V\rho_LV^\dagger)$ for a logical density matrix $\rho_L$.
Although logical $X$ operators leave $\ket{+_L}$ unchanged, the residual
syndrome can affect the correction chosen in a subsequent
error-correction cycle. We use this extension as a robustness criterion
for preparation to assess whether logical information remains protected
during the subsequent computation in the presence of preparation errors
and additional data errors. The fidelity of the prepared $\ket{+_L}$
state alone cannot establish the correctness of that computation.

In a practical implementation, the Pauli corrections determined from
measurement outcomes can be tracked classically in a Pauli frame and
propagated through the logical Clifford circuit. Residual preparation
errors must then be accounted for together with the data errors
accumulated before the next error-correction cycle. In this circuit
interpretation, $\rho_L$ denotes the ideal logical output state of the
Clifford circuit, evaluated immediately before the next error-correction
cycle. To describe the noise at this point, the preparation errors
represented by $\mathcal S$ must also be propagated through the circuit.
The effective noise therefore includes both the propagated preparation
errors and the additional data errors modeled below.

The resulting logical error rate can depend on the choice of the data
decoder $c$. To illustrate this dependence, consider the
two-dimensional toric code and choose every correction $c(\sigma)$
to avoid the support of a chosen representative of a logical $Z$
operator. Let $\sigma$ consist of two adjacent syndrome defects on
opposite sides of this logical $Z$ loop, and let $X_e$ be the
single-qubit $X$ error connecting them, so that $H_Ze=\sigma$.
The combined error $X_eX_{c(\sigma)}$ has zero syndrome and
anticommutes with the chosen logical $Z$ operator. It therefore
represents a nontrivial logical $X$ operator. Thus there could be a constant logical failure probability. 
{ Under linear soundness,
we use a minimum-weight data decoder,}
\begin{equation}
\label{eq:mw-data-decoder}
 c_{\rm min}(\sigma)\in\argmin_{x\in C^1:\,H_Zx=\sigma}|x|,
\end{equation}
{ where one minimum-weight correction is chosen for each syndrome.
For linear confinement, we use the data decoder constructed in
Appendix~\ref{app:polynomial-threshold}. It constructs corrections along paths of single-qubit syndrome changes within a region of small closeness weight, with different paths between the same endpoints giving equivalent corrections up to $X$ stabilizers.} For numerical simulations, we will use the belief propagation with
localized statistics decoding (BPLSD) decoder.

Another reason to consider the combination of $\mathcal S$ with data error is that the effective error channel $\mathcal S$ alone is exactly correctable.
To see this, write $\Pi$ for the code-space projector.
Then
\begin{equation}
\label{eq:preparation-kl}
 \Pi X_{c(\sigma)}X_{c(\tau)}\Pi
 =\delta_{\sigma,\tau}\Pi.
\end{equation}
Here $\delta_{\sigma,\tau}$ is the Kronecker delta. For distinct
syndromes, the operator between the projectors has nonzero syndrome;
for equal syndromes $\sigma$ and $\tau$, it is the identity since $c(\sigma)=c(\tau)$. The Kraus operators of
$\mathcal S$ therefore satisfy the Knill--Laflamme
conditions~\cite{knillTheoryQuantumErrorcorrecting1997}.
A perfect syndrome measurement would identify and remove the preparation
error. So the practically meaningful logical robustness must be examined under additional data noise,
\begin{equation}
\label{eq:data-noise}
 \mathcal N(\rho)=\sum_{\eta\in C^1}p(\eta)X_\eta\rho X_\eta,
 \qquad p(\eta)=\Pr(E=\eta).
\end{equation}
The total error in $\mathcal N\circ\mathcal S$ is $F=E+c(\Sigma)$.
{ To define the logical error rate, we assume an ideal final recovery
that perfectly measures all the checks and applies the correction
$c(H_ZF)$~\cite{quintavalleSingleShotErrorCorrection2021}.}

\section{State-preparation threshold theorem}
\label{sec:threshold}

{\subsection{Soundness, confinement, and the threshold bound}}

{Here we introduce the specific definition of soundness and confinement, and then state our main theorem.} We use the definition of
Refs.~\cite{campbellTheorySingleshotError2019,quintavalleSingleShotErrorCorrection2021},
restricted to the Pauli $X$ sector.

\begin{definition}[Soundness]
\label{def:soundness}
Let $t\geq0$ and let $f$ be a nonnegative function on the nonnegative
integers with $f(0)=0$. The matrix $H_Z$ is $(t,f)$-sound if every
$\sigma\in\im H_Z$ with $|\sigma|\leq t$ satisfies
\begin{equation}
 |c_{\rm min}(\sigma)|\leq f(|\sigma|),
\end{equation}
where $c$ is a minimum-weight data decoder. It is $(t,\rho)$-linearly sound
when $f(x)=x/\rho$ for a constant $\rho>0$.
\end{definition}

Soundness ensures that every sufficiently low-weight valid syndrome
admits a correction of controlled weight. 
For comparison, the confinement condition of
Ref.~\cite{quintavalleSingleShotErrorCorrection2021} is as follows.
\begin{definition}[Confinement]
\label{def:confinement}
Let $t\geq0$ and let $f$ be a nonnegative, nondecreasing function on the
nonnegative integers with $f(0)=0$. The matrix $H_Z$ has $(t,f)$
confinement if every $\sigma\in\im H_Z$ with $|c_{\rm min}(\sigma)|\leq t$
satisfies
\begin{equation}
\label{eq:confinement}
 |c(\sigma)|\leq f(|\sigma|),
\end{equation}
where $c$ is a minimum-weight data decoder. It has $(t,\rho)$-linear
confinement when $f(x)=x/\rho$ for a constant $\rho>0$.
\end{definition}
{ Confinement gives such control only for errors whose
reduced weight is already within a prescribed range. It
does not exclude low-weight syndromes with high-weight
minimum-weight corrections. Soundness properties directly imply confinement, but not vice versa \cite{quintavalleSingleShotErrorCorrection2021}.
However, we will show that linear confinement suffices for single-shot state preparation. Notice that in these two definitions, $c_{\rm}$ denotes a minimum-weight
data decoder, but the preparation procedure may use a different decoder.}

Qubits that share a measured check can contribute to the same syndrome bit.
Suppose each $Z$ check acts on at most $d_c$ qubits and each qubit
participates in at most $d_v$ such checks. Define the check-adjacency graph $G(H_Z)$ as 
whose vertices are qubits, with two qubits adjacent if they appear in a
common $Z$ check. Its maximum degree is therefore at most
$\Delta=(d_c-1)d_v$. Throughout, we take a fixed bound $\Delta>1$. 

{
We assume the state preparation procedure is formally followed by an ideal final recovery, which means the $Z $ checks are measurement perfectly, and then apply data decoder correspondingly. 
Denote $P_{\rm fail}(p,q)$ as the probability that the final recovery leaves a nontrivial logical $X$ error.}

{
\begin{theorem}[Single-shot state-preparation threshold]
\label{thm:single-shot-preparation}
Consider a CSS qLDPC code family with $Z$-check weights and
qubit degrees bounded by $d_c$ and $d_v$, Let $\Delta=(d_c-1)d_v$, and $X$ distance $d_X=\Omega(\log n)$.
Let $S_e$ be $(q)$-local stochastic readout error and $E$ be an independent
$(p)$-local stochastic data $X$ error. The following statements hold.
\\
\noindent 1. If $H_Z$ is $(\kappa n,\rho)$-linearly sound for constants $\kappa$ and $\rho$, use the
minimum weight syndrome and data decoders.
Define
\begin{equation}
\label{eq:q-prep-theorem}
 \alpha=\min\{\rho,\kappa\},\qquad
 \widetilde q(q)=\e(2\sqrt q)^\alpha,
\end{equation}
where $\e$ is the base of the natural logarithm. If $q<1/4$ and
$\Delta\widetilde q\leq1/2$, then the residual data error obeys
\begin{equation}
\label{eq:prep-containment}
 \Pr(e\subseteq c(\Sigma))\leq\widetilde q^{|e|}
 \quad\text{for every qubit set }e.
\end{equation}
Consequently $F=E+c(\Sigma)$ is local stochastic with parameter
\begin{equation}
\label{eq:s-total-theorem}
 s(p,q)=p+\widetilde q(q).
\end{equation}
If $2\e\Delta\sqrt{s(p,q)}<1$, a final minimum-weight recovery
using a perfect syndrome measurement satisfies
\begin{equation}
\label{eq:main-failure-bound}
 P_{\rm fail}(p,q)\leq
 \frac{n}{\e\Delta}
 \frac{[2\e\Delta\sqrt{s(p,q)}]^{d_X}}
 {1-2\e\Delta\sqrt{s(p,q)}}.
\end{equation}
\\
\noindent 2. If $H_Z$ has $(\kappa n^\beta,\rho)$-linear confinement
for constants $\kappa,\rho>0$ and $0<\beta\leq1$ independent of $n$,
use the syndrome and data decoders constructed in
Appendix~\ref{app:polynomial-threshold}.
Define
\begin{equation}
\begin{aligned}
 \widehat q(q)&=
 \bigl[2\e\max\{2,d_c(d_v-1)\}\bigr]^{2d_v}q^{\rho/32},\\
 \widehat p(p)&=(2\e\Delta)^{2d_cd_v}p^{\rho/(16d_v)},\\
 a(p,q)&=\max\left\{
 \widehat p(p),
 \widehat q(q)
 \right\}.
\end{aligned}
\end{equation}
For the total error $F=E+c(\Sigma)$, 
If $a(p,q)<1$, a final perfect syndrome measurement
followed by the same data decoder $c$ satisfies
\begin{equation}
\label{eq:polynomial-failure-bound}
 P_{\rm fail}(p,q)\leq
 (1+d_v)n\,a(p,q)^{\left\lfloor
 \min\{\kappa n^\beta,d_X/2\}
 \right\rfloor}.
\end{equation}
The decoders do not depend on the noise distribution.
\end{theorem}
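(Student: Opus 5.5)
For Part~1 the plan is to bound the residual correction first, then the final recovery. Fix a qubit set $e$. The event $e\subseteq c(\Sigma)$ with $\Sigma=\pi(S_e)$ forces $|\Sigma|\geq \rho|c_{\rm min}(\Sigma)|\geq\rho|e|$ whenever $|\Sigma|\le\kappa n$, by linear soundness. Since $\Sigma=S_e+r(S_e)$ and $r$ is minimum weight, we have $|r(S_e)|\le|S_e|$. Moreover $\Sigma$ lies in $S_e\cup r(S_e)$, so at least half of $\supp\Sigma$ lies inside $\supp S_e$. Hence the event requires some set $A\subseteq\supp S_e$ with $|A|\geq\lceil\alpha|e|/2\rceil$. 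When $|\Sigma|>\kappa n\ge\kappa|e|$, the same conclusion follows with $\kappa$ in place of $\rho$, which is why $\alpha=\min\{\rho,\kappa\}$ appears.

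The main obstacle is that a union bound over all such $A$ is not controlled by $|e|$ alone. I would first try to restrict $A$ to the connected components of $\Sigma$, in the sense of cluster arguments, that meet $e$ through $c(\Sigma)$. Minimum-weight decoding makes $c(\Sigma)$ split over connected components of the syndrome in the adjacency graph, so only clusters of $\Sigma$ touching $e$ matter. The number of connected sets of size $m$ containing a given vertex is at most $(\e\Delta)^m$, and weights $2^m$ count the choices of $A$ inside $\Sigma$. Combining $q^{|A|}$ with these entropy factors and summing over $m\ge\alpha|e|/2$ should give $\e(2\sqrt q)^{\alpha}$ per qubit of $e$. The conditions $q<1/4$ and $\Delta\widetilde q\le 1/2$ are exactly what make the geometric sums converge, and a factor $1/2$ is absorbed in each step.

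Once Eq.~\eqref{eq:prep-containment} holds, independence of $E$ and $S_e$ gives $\Pr(e\subseteq F)\le\sum_{e_1\sqcup e_2=e}p^{|e_1|}\widetilde q^{|e_2|}=(p+\widetilde q)^{|e|}$, since $e\subseteq E+c(\Sigma)$ implies $e\subseteq E\cup c(\Sigma)$. For the final bound Eq.~\eqref{eq:main-failure-bound}, I would use the standard argument for minimum-weight decoding. A logical failure means $F+c_{\rm min}(H_ZF)$ contains a nontrivial cocycle. It therefore contains a connected cluster in $G(H_Z)$ of size $m\ge d_X$ in which at least half the qubits lie in $F$. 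Counting clusters through each of the $n$ qubits by $(\e\Delta)^m$ and subsets by $2^m$ gives $\sum_{m\ge d_X} n(\e\Delta)^{m-1}2^m s^{m/2}$. This geometric sum yields the stated expression.

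For Part~2 I would follow Quintavalle et al.\ and use the closeness-weight syndrome decoder together with the path-based data decoder from Appendix~\ref{app:polynomial-threshold}. The key step is a cluster decomposition of $S_e\cup E$ into $\gamma$-clusters of bounded-degree neighborhoods. Any logical failure, or any violation of confinement range, forces a connected cluster of size at least $\min\{\kappa n^\beta,d_X/2\}$. Inside that cluster, linear confinement converts the syndrome weight into correction weight, at the cost of a factor $\rho$ in the exponents $\rho/32$ and $\rho/(16d_v)$. I would then bound the probability that a large connected cluster has a constant fraction of its sites faulty. Local stochasticity together with the $(2\e\Delta)^{O(d_cd_v)}$ counting of lattice animals gives the bound $a^{m}$ per cluster of size $m$. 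A union bound over the $(1+d_v)n$ starting sites (qubits and checks) then gives Eq.~\eqref{eq:polynomial-failure-bound}. The delicate point I expect is keeping the decoder's correction inside the low-closeness-weight region, so that confinement applies. Here I would rely on the path construction guaranteeing that corrections for different clusters do not interact.
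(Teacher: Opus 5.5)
\textbf{Part 1.} Your combination step, $\Pr(e\subseteq F)\le(p+\widetilde q)^{|e|}$, and your cluster count for the final minimum-weight recovery agree with the paper. The step that carries Part~1, however, has a genuine gap: the local stochastic bound on $c(\Sigma)$.

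You lean on the claim that minimum-weight decoding makes $c(\Sigma)$ split over connected components of the syndrome. This is false. A connected minimum-weight correction can have a syndrome that is disconnected in $G_s$, and the individual components need not lie in $\im H_Z$ (think of the two endpoint defects of a short string). This breaks your localization. You derived the half-weight property only for $\Sigma$ as a whole. It extends to a piece $b\subseteq\Sigma$ when $b\in\im H_Z$, by comparing $r(S_e)$ with $r(S_e)+b$, which lie in the same coset, but not to arbitrary pieces. Your ``$2^m$ choices of $A$ inside $\Sigma$'' is also a union bound over subsets of a random set.

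The paper localizes in the qubit graph instead:
\begin{itemize}
\item Let $w$ be the union of the components of $c(\Sigma)$ in $G(H_Z)$ that meet $e$. Then $w$ is itself a minimum-weight correction of its own syndrome $H_Zw\subseteq\Sigma$ (Lemma~\ref{lemma:UCC}).
\item That syndrome is valid, so $|S_e\cap H_Zw|\ge|H_Zw|/2$ (Lemma~\ref{lemma:half-weight}).
\item Soundness applied to $w$ (or $|w|\le n$ when $|H_Zw|>\kappa n$) gives $|H_Zw|\ge\alpha|w|$.
\item Since $H_Zw$ is fixed once $w$ is, $\Pr(w\sqsubseteq c(\Sigma))\le(2\sqrt q)^{\alpha|w|}$.
\end{itemize}
Summing over sets $w\supseteq e$ whose components each meet $e$, counted by $M_\Delta(\omega,e)\le\e^{|e|-1}(\e\Delta)^{\omega-|e|}$, produces $\widetilde q=\e(2\sqrt q)^\alpha$ and the condition $\Delta\widetilde q\le1/2$. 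A sum over $m\ge\alpha|e|/2$ with single-vertex anchoring gives neither.

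\textbf{Part 2.} This is an outline rather than a proof, and the mechanism it names is not the one that makes the bound work. The claim that any failure forces a connected cluster of size at least $\min\{\kappa n^\beta,d_X/2\}$ is asserted, and the exponents $\rho/32$ and $\rho/(16d_v)$ are read off the target. The path construction gives path independence modulo $B^1$ inside $\mathcal V$, not non-interaction of corrections for different clusters. Three ingredients are missing.
\begin{itemize}
\item The closeness-weight form of confinement: $\rho\min_{b\in B^1}\|e+b\|_\ell\le\|H_Ze\|_{2d_v\ell}$ whenever the left minimum is at most $\ell/2$ (Lemma~\ref{lemma:polynomial-confinement}). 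It is proved by replacing each component of a stabilizer-reduced error by a minimum-weight representative, which is legitimate since $2|f_i|\le\ell<d_X$. Confinement is then applied to components of weight at most $\ell/2\le\kappa n^\beta$.
\item Its use along a path to show that closed paths in $\mathcal V$ sum to elements of $B^1$. This is what makes $c$ well defined with $c(\sigma)+e_j+c(\sigma+H_Ze_j)\in B^1$.
\item The success criterion. It combines $\|\Sigma\|_{2d_v\ell}\le2\|S_e\|_{2d_v\ell}$, from the closeness-weight syndrome decoder, with $\|H_ZE'\|_{2d_v\ell}\le d_v\|E\|_{2d_cd_v\ell}$. Then, when $\|S_e\|_{2d_v\ell}\le\rho\ell/32$ and $\|E\|_{2d_cd_v\ell}\le\rho\ell/(16d_v)$, flipping $E$ one qubit at a time keeps the syndrome in $\mathcal V$. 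Summing the consistency relation gives $c(\Sigma)+E+c(\Sigma+H_ZE)\in B^1$.
\end{itemize}
The failure event is thus a \emph{dense} connected set of size $O(\ell)$, not a large one. The bound $\Pr(\|A\|_m>h)\le N(2\e D)^mp^h$, applied with $(N,D,m,h)=(d_vn,\Delta_s,2d_v\ell,\rho\ell/32)$ and $(n,\Delta,2d_cd_v\ell,\rho\ell/(16d_v))$, then gives $(1+d_v)n\,a^\ell$. Your final union bound over $(1+d_v)n$ sites is in this spirit, but it rests on the three steps above, which the proposal does not supply.
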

}

 Part~1 is proved in Appendix~\ref{app:threshold}.
Briefly speaking, minimum-weight decoding of syndrome errors ensures that actual readout
errors occupy at least half the support of any valid syndrome contained
in $\Sigma$.
Soundness then converts this syndrome weight into a bound on the size of
each possible residual error cluster. Counting clusters attached to an
arbitrary prescribed set gives Eq.~\eqref{eq:prep-containment}.
A second cluster argument bounds the probability that the combined error
and final recovery contain a nontrivial logical operator.

{ Part~2 is proved in Appendix~\ref{app:polynomial-threshold}.
The syndrome decoder controls the number of readout errors in connected
sets, and the data decoder is chosen so that sufficiently sparse data
errors preserve the logical information. This construction gives a
threshold when linear confinement holds up to power-law growing weight. It does not establish the local stochastic property of
residual error. Under the stronger assumption in part~1,
Eq.~\eqref{eq:prep-containment} gives this property, thus directly allowing
the preparation procedure to be composed with other fault-tolerant gadgets.}

\subsection{Examples}

As a first example, we consider the asymptotically good quantum locally
testable CSS codes proposed recently~\cite{gay2026asymptoticallygoodquantumlocally,li2026transversalnoncliffordgatesgood}.
These codes have constant rate, linear distances, and uniformly bounded
check weights and qubit degrees. Their soundness bound gives
$|c(\sigma)|\leq|\sigma|/\rho$ for every $\sigma\in\im H_Z$, with
$\rho>0$ independent of $n$. Thus both their $Z$ and $X$ check matrices
satisfy the required $(\kappa n,\rho)$-linear soundness.
{ Part~1 of}
Theorem~\ref{thm:single-shot-preparation} therefore yields a single-shot
state-preparation threshold with exponentially small logical failure
probability at sufficiently small readout and data error rates.

The main theorem also allows sublinear distance. To give such an example,
we use the two-skeletons of suitable finite quotients of a three-dimensional
Bruhat--Tits building. The constructions and expansion results in
Refs.~\cite{lubotzkySamuelsVishneExplicit2005,%
evraKaufmanZemorDecodable2022,evraKaufmanBoundedDegree2017}
give a family of CSS qLDPC codes whose $Z$-check syndrome maps are
$(\kappa n,\rho)$-linearly sound for constants $\kappa,\rho>0$
independent of $n$. The code parameters satisfy
\begin{equation}
 k>0,\qquad d_X=\Omega(n),\qquad
 d_Z=\Omega(\log n).
\end{equation}
This family therefore admits a single-shot state-preparation threshold
by { part~1 of} Theorem~\ref{thm:single-shot-preparation}. The construction and
verification of the theorem's assumptions are given in
Appendix~\ref{app:ramanujan}.

In addition, the concatenated code in the Batched state preparation protocol~\cite{xuBatchedHighRate2025} serves as another example. Our theorem rigorously establishes the fault-tolerance guarantee of this protocol.

{

Quantum expander codes are hypergraph-product CSS qLDPC codes constructed
from classical expander codes~\cite{leverrierQuantumExpanderCodes2015}.
With fixed graph degrees and sufficiently strong two-sided vertex
expansion of the classical Tanner graphs, these families have
$d_X=\Omega(\sqrt{n})$, and syndrome
robustness~\cite{fawziConstantOverheadQuantum2018b,quintavalleSingleShotErrorCorrection2021}
gives
\begin{equation}
\label{eq:expander-confinement}
 |c(\sigma)|\leq\alpha\sqrt{n}
 \quad\Longrightarrow\quad
 |\sigma|\geq\rho|c(\sigma)|,
\end{equation}
for any minimum-weight correction $c(\sigma)$, with constants
$\alpha,\rho>0$ independent of $n$. Thus these families have
$(\alpha\sqrt{n},\rho)$-linear confinement in the sense of
Definition~\ref{def:confinement}. Part~2 of Theorem~\ref{thm:single-shot-preparation}, with
$\beta=1/2$, therefore guarantees a nonzero single-shot state-preparation
threshold for suitable syndrome and data decoders.
}

\section{Statistical-mechanical mapping}
\label{sec:smmapping}

In this section, we show how to express the logical error rate { for a chosen data decoder $c$ and maximum-likelihood final recovery} in terms of the thermal state of a statistical mechanical model. This facilitates physical analysis and numerical simulation.\cite{dennisTopologicalQuantumMemory2002,chubbStatisticalMechanical2021}.

We define the auxiliary functions:

\begin{equation}
Q(s):=q^{|s|}(1-q)^{r_Z-|s|}\ \ , \ \ P(E):=p^{|E|}(1-p)^{n-|E|}
\end{equation}

Following the single-shot state preparation procedure defined above, the system starts in $\ket{+}^{\otimes n}$ and undergoes a Z syndrome measurement. The quantum state collapses to $\ket{\psi_\sigma}:=X_{c(\sigma)}\ket{+_L}$ ($\sigma\in\im H_Z$), with equal probability for each outcome. However, a readout error $s_e\in C^2$ affects the Z syndrome readout. The observed readout is therefore $s=\sigma+s_e\in C^2$. Next, the system undergoes a Pauli X error $X_e$ ($e\in C^1$). The quantum state becomes $X_e\ket{\psi_\sigma}=X_\eta\ket{+_L}$ ($\eta:=c(\sigma)+e$). {Here the preparation error represented by $c(\sigma)$ differs an element of $\ker H_Z$ from Eq.~\eqref{eq:preparation-channel}. These two representatives prepare the same state $\ket{+_L}$. The difference doesn't effect the behavior of preparing 3D toric code.} To compute the logical error rate, we then perform another round of ideal Z syndrome measurement on this final state, as defined above for the ideal final recovery. This gives the ideal measurement outcome $\sigma_\eta:=H_Z\eta=\sigma+H_Ze$.

To estimate the logical error rate, we must infer the unknown readout error $s_e$ and Pauli X error $e$. We can only use the outcome $s$ of the first Z syndrome measurement and the outcome $\sigma_\eta$ of the second, ideal Z syndrome measurement. We then choose the optimal correction $X_g$ ($g\in C^1,\ H_Zg=\sigma_\eta$) to minimize the logical error rate.

In fact, the readout error $s_e$, the Pauli X error $e$, and the first collapse outcome $\sigma$ together form the physical quantities to be inferred. These quantities are subject to two constraints imposed by the first observed readout $s$ and the second, ideal observed readout $\sigma_\eta$. They are therefore not completely free:

\begin{equation}
\begin{aligned}
    s&=\sigma+s_e,\\
    \sigma_\eta&=\sigma+H_Ze.
\end{aligned}
\end{equation}

The constraints are given by the outcome $s$ of the first Z syndrome measurement and the outcome $\sigma_\eta$ of the second, ideal Z syndrome measurement. Under these constraints, the unnormalized conditional probability of the readout error $s_e$, the Pauli X error $e$, and the first collapse outcome $\sigma$ is:

\begin{equation}
\Pr(s_e,e,\sigma\mid s,\sigma_\eta)\propto
\begin{cases}
    Q(s_e)P(e),&
    \substack{s=\sigma+s_e,\ \sigma_\eta=\sigma+H_Ze,\\
    \sigma\in\im H_Z},\\
    0,&\text{otherwise}.
\end{cases}
\end{equation}

After inferring a set of values for the readout error $s_e$, the Pauli X error $e$, and the first collapse outcome $\sigma$ according to their probabilities, we apply the corresponding operation $X_g$ as a correction. The logical error rate is then determined by whether $X_{\eta+g}=X_{c(\sigma)+e+g}$ is equivalent to the identity channel up to stabilizers or belongs to a different logical sector.

We are not concerned with the exact values of the readout error $s_e$, the Pauli X error $e$, and the first collapse outcome $\sigma$. Nor are we concerned with whether these values are inferred correctly. Our concern is the residual logical error after the correction obtained from the inference. Therefore, maximal likelihood recovery must select the most probable logical sector.

We first define the logical sector. Consider a given outcome $s$ of the first Z syndrome measurement and a given outcome $\sigma_\eta$ of the second, ideal Z syndrome measurement. We define a representative $c_0(\sigma_\eta)\in C^1$ for the correction operation, with $H_Zc_0(\sigma_\eta)=\sigma_\eta$. Any proposed correction operation $X_g$ differs from this representative by stabilizers and logical X operators:

\begin{equation}
\begin{gathered}
    g=c_0(\sigma_\eta)+b+l,\qquad b\in B^1,\quad [l]\in H^1,\\
    \text{For each logical class }[l]\text{, choose a fixed}\\
    \text{representative vector }l\in Z^1.
\end{gathered}
\end{equation}

Here, $b\in B^1$ is a stabilizer, and $l$ is either a logical X operator or the identity. We use l to label the logical sector. The resulting logical error rate is independent of the choice of the function $c_0()$.

The actual residual logical error is $X_{\eta+g}$. Thus, the logical sector of the residual error is $[\eta+g]=[\eta+c_0(\sigma_\eta)]+[l]\in H^1$.

We express the probability of each logical class as a sum over configurations:

\begin{equation}
\begin{aligned}
    \Pr(s,\sigma_\eta,[l])
    &=2^{-r_Z'}\sum_{\sigma\in\im H_Z}\sum_{b\in B^1}
    Q(s+\sigma)\\
    &\quad\times P\!\left(c(\sigma)+c_0(\sigma_\eta)+b+l\right).
\end{aligned}
\end{equation}
Here, $r_Z'=\rank H_Z$, and $\Pr(s,\sigma_\eta,[l])$ denotes the joint probability of observing $s,\sigma_\eta$ and having an actual total error that satisfies $[\eta+c_0(\sigma_\eta)]=[l]$.

We can interpret this probability as the partition function within a logical sector:

\begin{equation}
\mathcal Z(s,\sigma_\eta,[l])
:=\Pr(s,\sigma_\eta,[l]).
\end{equation}

To minimize the logical error rate, we perform inference under the constraints imposed by the current $s$ and $\sigma_\eta$. We select the $[l]$ with the smallest probability of a logical error:

\begin{equation}
\underset{{[l]\in H^1}}{\argmax}
\Pr\left(s,\sigma_\eta,[l]|\sigma_\eta\right)=\underset{{[l]\in H^1}}{\argmax} \mathcal Z(s,\sigma_\eta,[l])
\end{equation}

The total probability serves as the partition function:

\begin{equation}
\label{eq:mapping-joint-normalization}
    \mathcal Z(s,\sigma_\eta)
    :=\sum_{[l]\in H^1}\Pr(s,\sigma_\eta,[l])
    =\Pr(s,\sigma_\eta).
\end{equation}

With this interpretation, inferring the logical class requires identifying the most probable logical sector in the thermal distribution associated with $\mathcal Z(s,\sigma_\eta)$. Under the optimal correction operation, one minus the logical error rate equals the probability of the most probable logical sector in the same thermal distribution:

\begin{equation}
\begin{gathered}
    P_{\rm fail}^{\rm opt}(s,\sigma_\eta)
    :=1-\max_{[l]\in H^1}
    \frac{\Pr(s,\sigma_\eta,[l])}{\mathcal Z(s,\sigma_\eta)},\\
    \mathcal Z(s,\sigma_\eta)>0.
\end{gathered}
\end{equation}

We have thus recast the problem as a problem in statistical mechanics. We sample configurations in $C^1$ described by $x:=e+c_0(\sigma_\eta)$ (with logical class $[x+c(H_Zx)]$) from the following target thermal distribution:
\begin{equation}
\Pr(x\mid s,\sigma_\eta)
\propto Q(s+H_Zx)P\!\left(x+c_0(\sigma_\eta)\right)
\end{equation}

One way to perform this sampling is through MCMC, by repeatedly updating $x$ and collecting samples. Finally, we compute the logical error rate of interest using samples collected after the Markov chain reaches stationarity.
\par

\subsection{Order parameter}
\label{subsec:mapping-order-parameter}

The maximum-likelihood decoder chooses the most probable logical class.
We now quantify how strongly the measurement records favor a single class.
Choose dual logical bases identifying $H_1$ and $H^1$ with $\mathbb F_2^k$.
Their natural binary pairing assigns a character
$\chi_u([l])=(-1)^{u\cdot[l]}$ to each $u\in H_1$. We define

\begin{equation}
\label{eq:mapping-logical-moments}
m_u(s,\sigma_\eta)
    :=\sum_{[l]\in H^1}\chi_u([l])P_{s,\sigma_\eta}([l]),
\end{equation}

and, for $k\geq1$,

\begin{equation}
\label{eq:mapping-qtop}
q_{\rm top}(s,\sigma_\eta)
    :=\frac{1}{2^k-1}\sum_{u\ne0}m_u(s,\sigma_\eta)^2.
\end{equation}

This quantity measures how strongly the posterior is concentrated on
logical classes. Summing the weights of all logical sectors gives
$\mathcal Z(s,\sigma_\eta)$, which weights each pair of measurement records
in the disorder average:

\begin{equation}
\label{eq:mapping-qtop-average}
{}[q_{\rm top}]_{\rm dis}
    :=\sum_{s,\sigma_\eta}
    \mathcal Z(s,\sigma_\eta)q_{\rm top}(s,\sigma_\eta).
\end{equation}

\begin{proposition}
\label{prop:qtop-mld} For the fixed data decoder $c$, with the preparation record retained,

\begin{equation}
\label{eq:mapping-qtop-bound}
\frac{1+(2^k-1)[q_{\rm top}]_{\rm dis}}{2^k}
    \leq P_{\rm MLD}
    \leq\sqrt{\frac{1+(2^k-1)[q_{\rm top}]_{\rm dis}}{2^k}}.
\end{equation}

Consequently, along any family with $k\geq1$,

\begin{equation}
[q_{\rm top}]_{\rm dis}\longrightarrow1
    \quad\Longleftrightarrow\quad
    P_{\rm MLD}\longrightarrow1.
\end{equation}

\end{proposition}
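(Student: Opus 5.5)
We interpret $P_{\rm MLD}$ as the success probability of maximum-likelihood recovery. That is, $P_{\rm MLD}=\sum_{s,\sigma_\eta}\mathcal Z(s,\sigma_\eta)\max_{[l]}P_{s,\sigma_\eta}([l])$, where $P_{s,\sigma_\eta}([l])=\Pr(s,\sigma_\eta,[l])/\mathcal Z(s,\sigma_\eta)$ is the posterior over logical sectors. This matches $1-P_{\rm fail}^{\rm opt}$ averaged with the weights $\mathcal Z(s,\sigma_\eta)=\Pr(s,\sigma_\eta)$ from Eq.~\eqref{eq:mapping-joint-normalization}. The whole proof reduces to a pointwise inequality for a probability vector $P$ on $\mathbb F_2^k$, followed by averaging over records.

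\textbf{Step 1 (Parseval).} Since the characters $\chi_u$, $u\in H_1\cong\mathbb F_2^k$, form an orthogonal basis, we have
\begin{equation}
\sum_{u}m_u^2=2^k\sum_{[l]}P([l])^2 .
\end{equation}
Because $m_0=1$, this gives
\begin{equation}
\sum_{[l]}P([l])^2=\frac{1+(2^k-1)q_{\rm top}}{2^k}=:R .
\end{equation}

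\textbf{Step 2 (pointwise bounds).} Let $P_{\max}=\max_{[l]}P([l])$. Then
\begin{equation}
R=\sum P^2\le P_{\max}\sum P=P_{\max},
\end{equation}
and also $P_{\max}^2\le\sum P^2=R$. Hence $R\le P_{\max}\le\sqrt R$ holds for every record pair $(s,\sigma_\eta)$.

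\textbf{Step 3 (averaging).} Multiply by the weight $\mathcal Z(s,\sigma_\eta)$ and sum over records. The weights form a probability distribution, and $\sum\mathcal Z\,R=[1+(2^k-1)[q_{\rm top}]_{\rm dis}]/2^k$, which gives the lower bound. For the upper bound, Jensen's inequality (concavity of the square root) gives
\begin{equation}
\sum\mathcal Z\sqrt R\le\sqrt{\sum\mathcal Z R}.
\end{equation}
Together these yield Eq.~\eqref{eq:mapping-qtop-bound}.

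\textbf{Step 4 (limit).} Write $x=[q_{\rm top}]_{\rm dis}\in[0,1]$ and $A=(1+(2^k-1)x)/2^k$. Then
\begin{equation}
1-A=\frac{(2^k-1)(1-x)}{2^k}\in\left[\tfrac12(1-x),\,1-x\right]
\end{equation}
for $k\ge1$, uniformly in $k$. If $x\to1$, then $A\to1$, and the lower bound forces $P_{\rm MLD}\to1$. Conversely, if $P_{\rm MLD}\to1$, the upper bound gives $A\ge P_{\rm MLD}^2\to1$, so $1-x\le 2(1-A)\to0$.

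The only delicate point is bookkeeping rather than mathematics. One must check that the weights $\mathcal Z(s,\sigma_\eta)$ sum to one over the retained records, and that the posterior used in $m_u$ is the normalized sector distribution at fixed $(s,\sigma_\eta)$ for the fixed decoder $c$. This holds because $\sum_{[l]}\Pr(s,\sigma_\eta,[l])=\Pr(s,\sigma_\eta)$ and the records exhaust all outcomes. Records with $\mathcal Z=0$ contribute nothing and can be dropped.
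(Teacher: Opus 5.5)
Your proposal is correct and follows essentially the same route as the paper's proof. Both use Parseval's identity to write $\sum_{[l]}P_{s,\sigma_\eta}([l])^2=[1+(2^k-1)q_{\rm top}]/2^k$, then the pointwise sandwich $\sum P^2\le\max P\le\sqrt{\sum P^2}$, averaging with weights $\mathcal Z(s,\sigma_\eta)$ together with concavity of the square root, and the converse via $1-[q_{\rm top}]_{\rm dis}\le\frac{2^k}{2^k-1}(1-P_{\rm MLD}^2)\le2(1-P_{\rm MLD}^2)$.
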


The proof uses the Parseval identity for the logical characters
and is given in Appendix~\ref{app:mapping}. For fixed $k$, a uniform logical
posterior has $q_{\rm top}=0$, whereas a posterior concentrated on one
class has $q_{\rm top}=1$. Intermediate values can describe partial logical
information. The proposition characterizes the limit of successful block
decoding; it does not imply that every unsuccessful regime has
$q_{\rm top}\to0$.

To evaluate this criterion by Monte Carlo sampling, we express the logical
expectation values as thermal averages of microscopic observables.
For the configurations $x=e+c_0(\sigma_\eta)$ defined above, the logical
class is $[x+c(H_Zx)]$.
Choose a cycle $\bar z_u\in Z_1$ representing $u\in H_1$, and define

\begin{equation}
\label{eq:mapping-microscopic-observable}
\begin{aligned}
\ell_c(x) & :=x+c(H_Zx),
\\
O_u^{(c)}(x) & :=(-1)^{\langle\bar z_u,\ell_c(x)\rangle}.
\end{aligned}
\end{equation}

Since $H_Z\ell_c(x)=0$, this observable depends only on the
cohomology class of $\ell_c(x)$. Its thermal average under the target
distribution $\Pr(x\mid s,\sigma_\eta)$ defined above satisfies

\begin{equation}
\label{eq:mapping-microscopic-identity}
\bigl\langle O_u^{(c)}(x)\bigr\rangle_{s,\sigma_\eta}
    =m_u(s,\sigma_\eta).
\end{equation}

We use this explicitly defined logical observable as the generalized
Wilson loop in the mapping. It includes the correction inferred from the
syndrome, so evaluating it can require information outside the support of
$\bar z_u$.
Equations~\eqref{eq:mapping-qtop-average}
and~\eqref{eq:mapping-microscopic-identity} therefore provide a Monte Carlo
prescription for the maximum-likelihood decoding threshold. We can estimate
the squared thermal averages using two independent equilibrium replicas
at the same fixed measurement records $(s,\sigma_\eta)$ and noise
parameters $(p,q)$.
The Parseval identity also gives an equivalent estimator based on the
probability that the two replicas have the same logical class.

\subsection{Numerical study}

\label{sec:numerics}

\subsubsection{Code families}

We consider the 3D toric code, which provides a test
case with check redundancy and nonlinear soundness. We use the toric
convention with qubits on edges of a periodic cubic lattice, $X$ checks on
the six edges incident on a vertex, and $Z$ checks on the four edges
surrounding a plaquette~\cite{castelnovoTopologicalOrder2008}. Products of
plaquette checks satisfy local cube relations and global relations on
noncontractible planes. These relations constrain the valid syndromes
and must be included in the syndrome decoder. In the dual lattice, an $X$
error is a surface whose boundary is the plaquette syndrome. Small
contractible loops can bound surfaces with area quadratic in their length.
The resulting relation between syndrome weight and minimum correction
weight is nonlinear, so this geometry lies outside the linear-soundness
{ hypothesis in part~1 of Theorem~\ref{thm:single-shot-preparation}.}

According to Definition~\ref{def:soundness}, the toric-code
$Z$-syndrome map detecting $X$ errors is $(t,f)$-sound with
$t=L-1=\Theta(n^{1/3})$ and $f(x)=Cx^2$, where $C$ is independent of $L$.
For $|\sigma|<L$, the dual syndrome loops admit a filling of area at most
$C|\sigma|^2$ (Appendix~\ref{app:code-families}). Thus the minimum
correction weight $w$ obeys $w\leq C|\sigma|^2$, or equivalently
$|\sigma|\geq\sqrt{w/C}$; this square-root dependence is the sublinear
soundness referred to below. Square membrane boundaries attain
$w=\Theta(|\sigma|^2)$, ruling out an $L$-independent linear soundness bound. {
These minimum-weight membranes can also be chosen with
growing area within any polynomially growing correction-weight range.
They therefore violate the linear confinement condition in part~2. We study the 3D toric code to explore whether the single-shot
state-preparation threshold persists
in the sublinear-soundness regime beyond Theorem~\ref{thm:single-shot-preparation}.}

\subsubsection{Sampling procedure}

For each disorder realization, we draw the initial syndrome $\sigma$
uniformly from $\im H_Z$ and independently draw Bernoulli readout and
data errors $s_e$ and $e$ with probabilities $q$ and $p$. We set
\begin{equation}
\begin{aligned}
    s&=\sigma+s_e,\\
    \sigma_\eta&=\sigma+H_Ze.
\end{aligned}
\end{equation}
These records have joint probability $\mathcal Z(s,\sigma_\eta)$,
as defined in Eq.~\eqref{eq:mapping-joint-normalization}, and give the
disorder average in Eq.~\eqref{eq:mapping-qtop-average}.
For the expander control at $q=0$, we equivalently fix $\sigma=0$:
at zero readout noise, shifting the initial syndrome only relabels the
logical classes and leaves both observables unchanged.

We choose $c_0=c$ as the reference correction. For both toric-code data
sets and the expander-code data at $q=p$, we use belief propagation with
localized statistics decoding (BPLSD) as a computationally accessible
substitute for exact minimum-weight decoding, with
$H_Zc(\sigma)=\sigma$ verified for each correction. The expander control
at $q=0$ uses a fixed linear right inverse on $\im H_Z$; since the
records then fix the syndrome, this choice only permutes logical-class
labels and leaves both observables unchanged.

At fixed records $(s,\sigma_\eta)$, we sample $x\in C^1$ from the thermal
distribution proportional to
$Q(s+H_Zx)P(x+c_0(\sigma_\eta))$.
Each sampled configuration $x$ is assigned the logical class of
$\ell_c(x)=x+c(H_Zx)$ defined in
Eq.~\eqref{eq:mapping-microscopic-observable}, using the same fixed
data decoder $c$ throughout. We combine the two chains for each disorder
realization before evaluating either observable. We estimate the logical
error rate $[1-\max_{[l]}P_{s,\sigma_\eta}([l])]_{\rm dis}$.
For the toric code, the class probabilities used in this estimator are
obtained by summing the conditional weights of all eight logical shifts
of each sampled configuration. The expander estimator uses the sampled
class frequencies. For both families, $q_{\rm top}$ is estimated from the
combined integer class frequencies with the identity character excluded:

\begin{equation}
\label{eq:numerical-normalization}
q_{\rm top}(s,\sigma_\eta)
 =\frac{2^k\sum_{[l]}P_{s,\sigma_\eta}([l])^2-1}{2^k-1}.
\end{equation}

This normalization is applied separately to each code instance before
averaging. The reported standard errors quantify disorder sampling and,
where specified, variation among code instances. They do not include
finite-chain bias. Convergence of $[q_{\rm top}]_{\rm dis}$ to one
corresponds to asymptotically successful maximum-likelihood decoding by
Proposition~\ref{prop:qtop-mld}.

\subsubsection{Numerical results}

Our MCMC simulations sample the thermal distribution using
parallel tempering~\cite{hukushimaExchangeMonteCarlo1996} for the toric code,
combining local heat-bath and
logical updates with replica exchanges. For the expander codes, the
zero-readout-noise control uses stabilizer and logical heat-bath updates
supplemented by frozen-variable moves, while simulations with readout
noise use block Gibbs updates and simulated tempering. We estimate the
observables from equilibrated samples and average over disorder
realizations and, for the expander codes, code instances.

Figure~\ref{fig:toric-numerics} shows a reversal of the size dependence
in both observables: at smaller $p$, increasing $L$ reduces the logical
error rate and increases $[q_{\rm top}]_{\rm dis}$, whereas the ordering
reverses at larger $p$. This crossing behavior persists at nonzero
readout noise, $q=p/10$, supporting a nonzero preparation threshold in
the sublinear-soundness setting. { This trend suggest
that sublinear soundness can support a preparation threshold. }

\begin{figure*}[t]
\centering
\includegraphics[width=0.8\textwidth]{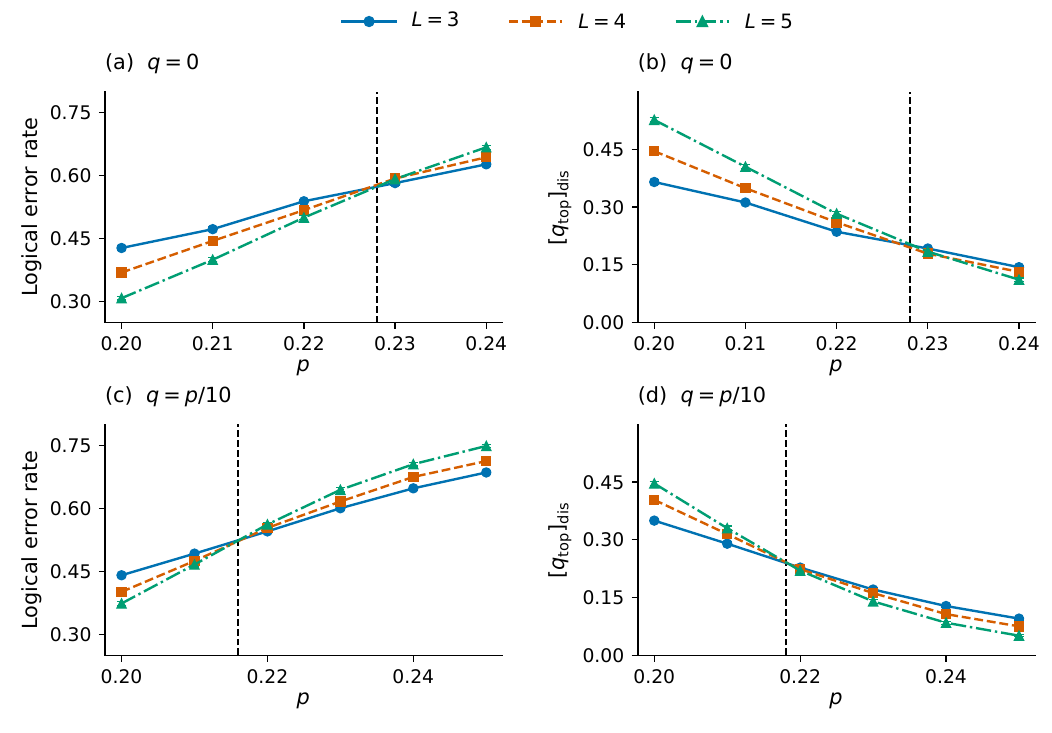}
\caption{\textbf{Numerical evidence for a nonzero single-shot
state-preparation threshold in the three-dimensional toric code
with readout errors.}
Panels (a,b) show noiseless readout, $q=0$, while panels (c,d)
include readout errors with $q=p/10$, where $p$ and $q$ are the
data and readout error probabilities, respectively.
At lower noise, increasing the system size suppresses the logical
error rate after ideal maximum-likelihood final recovery (left)
and increases the logical-sector order parameter
$[q_{\rm top}]_{\rm dis}$ (right); both size dependences reverse
at higher noise.
The persistence of these crossings at nonzero readout noise
supports a single-shot preparation threshold even with sublinear
soundness.
Periodic cubic lattices have $L=3,4,5$, $n=3L^3$, and $k=3$.
Vertical dashed lines mark approximate finite-size crossing
positions. Error bars indicate one standard error over independent
disorder realizations.}
\label{fig:toric-numerics}
\end{figure*}

\section{Conclusions}
\label{sec:conclusions}

{ In this work, we have established small-set linear confinement up to
a polynomially growing correction weight as a sufficient condition for
direct single-shot state preparation. For families with at least
logarithmic $X$ distance, the constructed decoders give a logical failure
probability exponentially small in the minimum of the confinement range
and the $X$ distance below nonzero readout and data error thresholds.
Under linear soundness up to syndrome weights proportional to the code
length, minimum-weight decoding further yields local stochastic residual
errors, allowing composition with other fault-tolerant gadgets.} Asymptotically good quantum
locally testable codes and the Ramanujan-complex construction provide
explicit families satisfying { this stronger soundness assumption.
 To study weaker code conditions and the dependence on
the data decoder}, we have also developed a statistical mechanical
mapping. { Using this mapping, our Monte Carlo simulations of the three-dimensional
toric code show finite-size crossings in both the logical error rate and
the order parameter at nonzero readout noise. These results suggest that
sublinear soundness can support a single-shot state-preparation threshold,
beyond the sufficient conditions established by our theorem.}

A natural next question concerns the choice of data decoder in the effective
error channel. The theorem and numerical simulations use specific
decoders, and different choices can yield qualitatively different decoding
performance. In the statistical mechanical mapping, maximum-likelihood
decoding optimizes only the final ideal recovery, with the data decoder
defining the effective error channel held fixed. It remains to determine
how to incorporate optimization over this data decoder.
Further directions also include establishing a single-shot state-preparation
threshold for classically efficient decoders.

\section{Acknowledgements}
This work was supported by National Natural Science Foundation of China (Grant No.~92365111), Quantum Science and Technology--National Science and Technology Major Project (Grant No.~2021ZD0302400), and Shanghai Municipal Science and Technology (Grant No.~25LZ2600200).

\appendix
\section{Proof of the preparation threshold with linear soundness range}
\label{app:threshold}

We prove Theorem~\ref{thm:single-shot-preparation} with local stochastic
noise as defined in Definition~\ref{def:local-stochastic}.
The syndrome decoder $r$ assigns a fixed minimum-weight representative
to each coset of $\im H_Z$. Consequently,
$r(s+\sigma)=r(s)$ for every valid syndrome $\sigma$.

\subsection{Minimum-weight decoding and connected components}

We first show that a residual valid syndrome requires readout errors
on at least half of its support.

\begin{lemma}
\label{lemma:half-weight}
Let $\pi(s)=s+r(s)$. If $b\subseteq\pi(s)$ and
$b\in\im H_Z$, then $|s\cap b|\geq |b|/2$.
\end{lemma}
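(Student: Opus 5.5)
The plan is a standard exchange argument against the minimality of $r(s)$. Since $b\subseteq\pi(s)$ and $b\in\im H_Z$, the vector $\pi(s)+b$ is still a valid syndrome. It is obtained from $s$ by adding $r':=r(s)+b$. Thus $r'$ is an admissible competitor in the minimization of Eq.~\eqref{eq:syndrome-decoder}, because $s+r'=\pi(s)+b\in\im H_Z$. By minimality of $r(s)$ we therefore get $|r(s)|\leq|r(s)+b|$.

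The key step is to compute $|r(s)+b|-|r(s)|$ in terms of the overlaps with $b$. Write $r=r(s)$. Adding $b$ to $r$ removes the positions in $b\cap r$ and adds the positions in $b\setminus r$, so
\begin{equation}
 |r+b|-|r|=|b|-2|b\cap r|.
\end{equation}
Minimality then gives $|b\cap r|\leq |b|/2$.

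Next we translate this into a statement about $s$. Since $b\subseteq\pi(s)=s+r$, every position of $b$ lies in exactly one of $s$ or $r$, and not both. Hence $b$ is the disjoint union of $b\cap s\cap r^{c}$ and $b\cap r\cap s^{c}$, and moreover $b\cap s\cap r=\emptyset$. This yields
\begin{equation}
 |b\cap s|=|b|-|b\cap r|\geq |b|/2.
\end{equation}

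The only point needing care is the bookkeeping of symmetric differences, namely confirming that $b\subseteq s+r$ excludes the positions of $b$ lying in both $s$ and $r$. I do not expect a real obstacle.

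A small remark on the choice of coset representative is also worth making. The lemma uses only that $r(s)$ is some minimum-weight element with $s+r(s)\in\im H_Z$, which holds for the fixed representative by construction. So the invariance $r(s+\sigma)=r(s)$ plays no role here; it is only needed later, when $s$ is replaced by $S_e$.
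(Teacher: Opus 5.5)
Your proposal is correct and follows the paper's own proof: both compare $r(s)$ with the competitor $r(s)+b$ in the same syndrome coset to get $|r(s)\cap b|\leq|b|/2$. Both then use that $s$ and $r(s)$ have complementary supports on $b\subseteq s+r(s)$, which gives $|s\cap b|=|b|-|r(s)\cap b|\geq|b|/2$. Your side remark is also accurate: the coset-invariance $r(s+\sigma)=r(s)$ plays no role in this lemma.
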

\begin{proof}
The vectors $r(s)$ and $r(s)+b$ belong to the same syndrome coset.
Minimum weight therefore gives
\begin{equation}
    |r(s)|\leq |r(s)+b|
    =|r(s)|+|b|-2|r(s)\cap b|.
\end{equation}
On $b\subseteq s+r(s)$, the supports of $s$ and $r(s)$ are
complementary. It follows that
\begin{equation}
    |s\cap b|=|b|-|r(s)\cap b|\geq |b|/2.
\end{equation}
\end{proof}

To apply this constraint to residual data errors, we decompose their
supports into connected components. Define the graph $G(H_Z)$ with one
vertex for each qubit and an edge between two qubits that share a $Z$
check. Its maximum degree is at most $\Delta=(d_c-1)d_v$.
We identify binary errors with their supports and write
$w\sqsubseteq c(\sigma)$ when $w$ is a union of connected components
of the support of $c(\sigma)$ on this graph.

\begin{lemma}
\label{lemma:UCC}
If $w\sqsubseteq c(\sigma)$, then $H_Zw\subseteq\sigma$ and
$|c(H_Zw)|=|w|$.
\end{lemma}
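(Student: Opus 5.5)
We need to prove: if $w$ is a union of connected components of $c(\sigma)$ (with $c$ the minimum-weight data decoder $c_{\rm min}$), then $H_Zw\subseteq\sigma$ and $|c(H_Zw)|=|w|$. The argument has two parts: a locality argument for the syndrome containment, and an exchange argument for minimality.

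\emph{Syndrome containment.} Write $c(\sigma)=w+w'$ with disjoint supports, where $w'$ is the union of the remaining components. No $Z$ check can touch both $w$ and $w'$. Indeed, if a check contained a qubit of $w$ and a qubit of $w'$, those two qubits would be adjacent in $G(H_Z)$. They would then lie in the same connected component, contradicting the decomposition. Hence the check supports split: every check touching $w$ has $(H_Zw')$-entry zero, and vice versa. So $H_Zw$ and $H_Zw'$ have disjoint supports. From $\sigma=H_Zc(\sigma)=H_Zw+H_Zw'$ it follows that $\sigma$ is the disjoint union of the two, and in particular $H_Zw\subseteq\sigma$.

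\emph{Weight equality.} Let $\tau=H_Zw\in\im H_Z$. Since $c$ is minimum weight and $w$ is a valid preimage of $\tau$, we have $|c(\tau)|\leq|w|$. For the reverse inequality, suppose $|c(\tau)|<|w|$. Then $x=c(\tau)+w'$ satisfies $H_Zx=\tau+H_Zw'=\sigma$. Moreover,
\[
|x|\leq|c(\tau)|+|w'|<|w|+|w'|=|c(\sigma)|,
\]
where the final equality uses disjointness of $w$ and $w'$. This contradicts the minimality of $c(\sigma)$. Hence $|c(H_Zw)|=|w|$.

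\emph{Main obstacle.} The only delicate point is the meaning of $c$. The lemma must be read with $c$ the minimum-weight data decoder $c_{\rm min}$ of Eq.~\eqref{eq:mw-data-decoder}, as used in part~1. The equality is a statement about weights only: we do not claim $c(H_Zw)=w$, since ties in the argmin may select a different minimizer. One should also note $c(0)=0$, which covers the case $w=0$.
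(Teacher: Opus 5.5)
Your proof is correct and follows essentially the same route as the paper: you use disjointness of the check supports of $w$ and the remaining components to get $H_Zw\subseteq\sigma$. You then use the exchange argument, replacing $w$ by a lighter preimage of $H_Zw$, to contradict the minimality of $c(\sigma)$. Your remark that $c$ must be the minimum-weight decoder, and that only the weight (not the specific decoder output) is controlled, matches how the paper states and uses the lemma.
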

\begin{proof}
Write $c(\sigma)=w\sqcup f$. Since $w$ and $f$ are disconnected,
no check meets both supports. Their syndromes are therefore disjoint,
so $H_Zw\subseteq\sigma$ follows from
$\sigma=H_Zw+H_Zf$. To prove the second claim, suppose some $w'$
satisfies $H_Zw'=H_Zw$ and $|w'|<|w|$. Then $w'+f$ has syndrome
$\sigma$ and weight at most $|w'|+|f|<|c(\sigma)|$,
contradicting the definition of $c$. Thus $w$ is a minimum-weight
representative of its syndrome. The argument does not require it to
coincide with a particular decoder output.
\end{proof}

We next bound the probability that an error contains a prescribed qubit
set. Let $\mathcal W(e)$ be the family of qubit sets $w$ that contain $e$
and whose connected components each intersect $e$. Write
$M_\Delta(\omega,e)$ for the number of members with $|w|=\omega$.
For nonempty $e$, Lemma~2 of
Ref.~\cite{gottesmanFaultTolerantQuantumComputation2014} gives the bound
\begin{equation}
\label{eq:component-count}
    M_\Delta(\omega,e)
    \leq \e^{|e|-1}(\Delta\e)^{\omega-|e|}.
\end{equation}
The sets counted here may be disconnected, provided each connected
component intersects $e$.

\subsection{Local stochasticity of the preparation error}

\begin{lemma}
\label{lemma:local-stochastic-syndrome}
Suppose $H_Z$ is $(\kappa n,\rho)$-linearly sound and has bounded
check and qubit degrees. Let $S_e$ be $q$-local stochastic and
$\Sigma=\pi(S_e)$. Define
\begin{equation}
    \alpha=\min\{\rho,\kappa\},\qquad
    \widetilde q=\e(2\sqrt q)^\alpha.
\end{equation}
If $\Delta\widetilde q\leq 1/2$, then $c(\Sigma)$ is
$\widetilde q$-local stochastic.
\end{lemma}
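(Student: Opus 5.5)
Fix a nonempty qubit set $e$; we want $\Pr(e\subseteq c(\Sigma))\leq\widetilde q^{|e|}$. On the event $e\subseteq c(\Sigma)$, let $w$ be the union of the connected components of $c(\Sigma)$ in $G(H_Z)$ that meet $e$. Then $w\sqsubseteq c(\Sigma)$, and $w\in\mathcal W(e)$. Lemma~\ref{lemma:UCC} gives two facts. First, $b:=H_Zw\subseteq\Sigma=\pi(S_e)$ with $b\in\im H_Z$. Second, $w$ is a minimum-weight representative of $b$, so $|c_{\min}(b)|=|w|$. Lemma~\ref{lemma:half-weight} then gives $|S_e\cap b|\geq|b|/2$. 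The event is therefore contained in the union, over $w\in\mathcal W(e)$, of the events that $S_e$ covers at least half of the fixed set $H_Zw$.

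Next we use soundness to relate $|b|$ to $|w|$. We split into two cases.
\begin{itemize}
\item If $|b|\leq\kappa n$, then $(\kappa n,\rho)$-linear soundness gives $|w|\leq|b|/\rho$, so $|b|\geq\rho|w|$.
\item If $|b|>\kappa n$, then since $|w|\leq n$ we get $|b|\geq\kappa|w|$.
\end{itemize}
In both cases $|b|\geq\alpha|w|$ with $\alpha=\min\{\rho,\kappa\}$. This case split is exactly where $\kappa$ enters $\alpha$.

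For fixed $w$, we union-bound over subsets $u\subseteq b$ of size $\lceil|b|/2\rceil$ and apply local stochasticity of $S_e$:
\[
\Pr\bigl(|S_e\cap b|\geq|b|/2\bigr)\leq 2^{|b|}q^{|b|/2}=(2\sqrt q)^{|b|}\leq(2\sqrt q)^{\alpha|w|}.
\]
The last inequality uses $2\sqrt q<1$, which holds because $q<1/4$. Summing over $w\in\mathcal W(e)$ with $|w|=\omega\geq|e|$ and using the count \eqref{eq:component-count} gives
\[
\Pr(e\subseteq c(\Sigma))\leq\sum_{\omega\geq|e|}\e^{|e|-1}(\Delta\e)^{\omega-|e|}(2\sqrt q)^{\alpha\omega}.
\]
Write $x=(2\sqrt q)^{\alpha}=\widetilde q/\e$. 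The sum becomes $\e^{-1}\widetilde q^{|e|}\sum_{j\geq0}(\Delta\widetilde q)^j\leq 2\e^{-1}\widetilde q^{|e|}\leq\widetilde q^{|e|}$, using the hypothesis $\Delta\widetilde q\leq1/2$.

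The main obstacle is justifying the union over $w$. The component set $w$ depends on the random $\Sigma$, and $c(\Sigma)$ is the decoder output, not an arbitrary minimum-weight vector. We handle this by noting that the containment event implies the existence of some $w\in\mathcal W(e)$ that is a minimum-weight representative of a valid syndrome contained in $\pi(S_e)$. Each such $w$ determines a deterministic set $b=H_Zw$, so the union bound is over deterministic sets. Lemma~\ref{lemma:UCC} already states that it does not depend on which minimum-weight output is chosen. We also need $w\neq\emptyset$ whenever $e\neq\emptyset$, and for $e=\emptyset$ the bound is trivial.
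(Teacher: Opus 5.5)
Your proposal is correct and follows the paper's proof essentially step for step. The shared steps are the union over $w\in\mathcal W(e)$ via Lemma~\ref{lemma:UCC}, the half-weight bound of Lemma~\ref{lemma:half-weight}, the two-case soundness estimate giving $|H_Zw|\geq\alpha|w|$, and the geometric sum against Eq.~\eqref{eq:component-count}; your union bound over subsets of size $\lceil|b|/2\rceil$ is slightly cleaner than the paper's binomial sum. The one step you assert without justification, $q<1/4$, follows from the hypothesis: $\Delta>1$ and $\Delta\widetilde q\leq1/2$ force $(2\sqrt q)^\alpha<1/(2\e)<1$.
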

\begin{proof}
If a fixed nonempty set $e$ is contained in $c(\Sigma)$, the union of
components of $c(\Sigma)$ that meet $e$ belongs to $\mathcal W(e)$. Hence
\begin{equation}
\label{eq:component-event-bound}
    \Pr(e\subseteq c(\Sigma))
    \leq\sum_{w\in\mathcal W(e)}
    \Pr(w\sqsubseteq c(\Sigma)).
\end{equation}
By Lemma~\ref{lemma:UCC}, each event in the sum is empty unless
$w$ is a minimum-weight representative of $H_Zw$. For such a $w$,
linear soundness gives $|H_Zw|\geq\rho|w|$ whenever
$|H_Zw|\leq\kappa n$. If instead $|H_Zw|>\kappa n$, then
$|H_Zw|\geq\kappa|w|$ because $|w|\leq n$. Thus, in either case,
\begin{equation}
\label{eq:component-soundness}
    |H_Zw|\geq\alpha|w|.
\end{equation}

On the event $w\sqsubseteq c(\Sigma)$, we have
$H_Zw\subseteq\Sigma$. Lemma~\ref{lemma:half-weight} therefore implies
$|S_e\cap H_Zw|\geq |H_Zw|/2$. Applying the local stochastic
assumption and a union bound gives
\begin{equation}
\begin{aligned}
    \Pr(w\sqsubseteq c(\Sigma))
    &\leq
    \sum_{j=\lceil |H_Zw|/2\rceil}^{|H_Zw|}
    \binom{|H_Zw|}{j}q^j\\
    &\leq (2\sqrt q)^{|H_Zw|}
    \leq (2\sqrt q)^{\alpha|w|}.
\end{aligned}
\end{equation}
The small-noise condition ensures $q<1/4$, as required for the last
inequality. The bound also holds for sets $w$ that do not have minimum
weight for their syndrome, since the corresponding event has probability
zero. Thus the soundness estimate is used only for sets that can occur
as unions of connected components of minimum-weight corrections.

Combining Eqs.~\eqref{eq:component-count} and
\eqref{eq:component-event-bound} gives
\begin{equation}
\begin{aligned}
    \Pr(e\subseteq c(\Sigma))
    &\leq\sum_{\omega=|e|}^{n}
    \e^{|e|-1}(\Delta\e)^{\omega-|e|}
    (2\sqrt q)^{\alpha\omega}\\
    &\leq
    \frac{\widetilde q^{|e|}}
    {\e(1-\Delta\widetilde q)}
    \leq\widetilde q^{|e|}.
\end{aligned}
\end{equation}
The last inequality uses $1/[\e(1-\Delta\widetilde q)]\leq2/\e<1$.
For $e=\emptyset$, the desired bound reduces to $1\leq1$.
The case $q=0$ follows directly from $S_e=0$ and $c(0)=0$.
\end{proof}

We now include the independent data noise. If $E_1$ and $E_2$ are
independent local stochastic errors with parameters $p_1$ and $p_2$,
their sum is local stochastic with parameter $p_1+p_2$.
To see this, every qubit in a support contained in $E_1+E_2$ must
belong to at least one of the two errors. Hence
\begin{equation}
\begin{aligned}
    \Pr(e\subseteq E_1+E_2)
    &\leq\sum_{e'\subseteq e}
    \Pr(e'\subseteq E_1)
    \Pr(e\setminus e'\subseteq E_2)\\
    &\leq\sum_{e'\subseteq e}
    p_1^{|e'|}p_2^{|e|-|e'|}
    =(p_1+p_2)^{|e|}.
\end{aligned}
\end{equation}
The first line uses independence. Applying this bound to
$F=E+c(\Sigma)$ gives the noise parameter
$s(p,q)=p+\widetilde q(q)$ stated in the theorem.

\subsection{Logical failure after ideal recovery}

Consider a perfect syndrome measurement followed by minimum-weight
recovery $R=c(H_ZF)$, and write $D=F+R$. Since $H_ZD=0$,
every connected component $W$ of $D$ has zero syndrome, because
distinct components meet disjoint sets of checks. If $D$ represents
a nontrivial logical $X$ operator, at least one component must also
represent such an operator and therefore satisfy $|W|\geq d_X$.

The recovery $R+W$ has the same syndrome as $R$. The minimum-weight
property of the chosen recovery therefore implies
\begin{equation}
    |R|\leq|R+W|,
    \qquad |R\cap W|\leq |W|/2.
\end{equation}
The supports of $F$ and $R$ are complementary on $W\subseteq D$,
so $|F\cap W|\geq |W|/2$. For a fixed connected set $W$ of size
$\omega$, local stochasticity gives
\begin{equation}
    \Pr(|F\cap W|\geq\omega/2)
    \leq\sum_{j=\lceil\omega/2\rceil}^{\omega}
    \binom{\omega}{j}s^j
    \leq(2\sqrt s)^\omega.
\end{equation}
Applying Eq.~\eqref{eq:component-count} with one prescribed vertex and
summing over the $n$ choices of that vertex bounds the number of connected
sets of size $\omega$ by
$n(\e\Delta)^{\omega-1}$. Summing over these sets gives, for
$2\e\Delta\sqrt s<1$,
\begin{equation}
\begin{aligned}
    P_{\rm fail}(p,q)
    &\leq\sum_{\omega=d_X}^n
    n(\e\Delta)^{\omega-1}(2\sqrt s)^\omega\\
    &\leq\frac{n}{\e\Delta}
    \frac{(2\e\Delta\sqrt s)^{d_X}}
    {1-2\e\Delta\sqrt s}.
\end{aligned}
\end{equation}

The bound in Eq.~\eqref{eq:main-failure-bound} gives the following
sufficient thresholds for the combined noise strength:
\begin{equation}
\label{eq:distance-dependent-threshold}
    s_{\rm th}=
    \begin{cases}
        \dfrac{\exp(-2/c_d)}{4\e^2\Delta^2},
        & d_X\geq c_d\log n,\\[6pt]
        \dfrac{1}{4\e^2\Delta^2},
        & d_X=\omega(\log n).
    \end{cases}
\end{equation}
Here $c_d>0$ is independent of $n$, and $\log$ denotes the natural
logarithm. The stronger distance assumption permits a larger threshold.
For every fixed $p,q$ with $s(p,q)<s_{\rm th}$, we show that
$P_{\rm fail}(p,q)\to0$ as $n\to\infty$.

These thresholds also ensure the conditions used to bound the preparation
error. Indeed, $\widetilde q(q)\leq s(p,q)<s_{\rm th}\leq
1/(4\e^2\Delta^2)$ and $\Delta>1$ imply $q<1/4$ and
$\Delta\widetilde q(q)<1/2$. The case $s(p,q)=0$ is immediate.
Otherwise, let $a=2\e\Delta\sqrt{s(p,q)}$, so that $0<a<1$.

For the first case, $d_X\geq c_d\log n$ implies
$n\leq\exp(d_X/c_d)$ for all sufficiently large $n$. Hence
\begin{equation}
\label{eq:logarithmic-distance-decay}
    P_{\rm fail}(p,q)
    \leq\frac{\exp\!\left[-d_X\left(\log(1/a)-1/c_d\right)\right]}
    {\e\Delta(1-a)}.
\end{equation}
The threshold gives $a<\exp(-1/c_d)$, so the coefficient
$\log(1/a)-1/c_d$ is positive. The logical failure probability therefore
decays exponentially with $d_X$ and tends to zero as $n\to\infty$.

For the second case, $d_X/\log n\to\infty$ and $a<1$ imply
\begin{equation}
\label{eq:superlogarithmic-distance-decay}
    \log\!\left(n a^{d_X}\right)
    =\log n-d_X\log(1/a)\longrightarrow-\infty.
\end{equation}
Thus $P_{\rm fail}(p,q)\to0$. Since $\log n=o(d_X)$, the same bound
also gives exponential decay with $d_X$ in this case.

The constants in Theorem~\ref{thm:single-shot-preparation} can therefore
be chosen as
\begin{equation}
\label{eq:separate-noise-thresholds}
    p_0=\frac{s_{\rm th}}{2},\qquad
    q_0=\frac14\left(\frac{s_{\rm th}}{2\e}\right)^{2/\alpha}.
\end{equation}
For $p<p_0$ and $q<q_0$, we have
$s(p,q)=p+\widetilde q(q)<s_{\rm th}$, which proves the threshold
statement in both cases.

For every error configuration on which recovery succeeds, $F+R$ is an
$X$ stabilizer and acts trivially on
the entire code space, including encoded states entangled with a
reference system. The same probability bound therefore controls the
entanglement infidelity of the channel
$\mathcal N\circ\mathcal S$ after recovery.

{
\section{Proof of the threshold under linear confinement}
\label{app:polynomial-threshold}

We prove part~2 of Theorem~\ref{thm:single-shot-preparation}, assuming
$(\kappa n^\beta,\rho)$-linear confinement in the sense of
Definition~\ref{def:confinement}.
We retain the noise model and qubit graph $G(H_Z)$ of
Appendix~\ref{app:threshold}, but change the syndrome and data decoders.
The proof first bounds errors within connected sets, then constructs
decoders that preserve logical information under such errors. The
connected-set bound in Eq.~\eqref{eq:component-count} gives the final
probability estimate.

\subsection{Confinement within connected sets}

In addition to $G(H_Z)$, let $G_s$ be the graph whose vertices are the
measured $Z$ checks, with two checks adjacent when they share a qubit.
Its maximum degree is at most $d_c(d_v-1)$. Identity checks may be
omitted, so $r_Z\leq d_vn$. For a vertex set $A$ in either graph, define the closeness weight \cite{quintavalleSingleShotErrorCorrection2021}
\begin{equation}
\label{eq:polynomial-closeness}
 \|A\|_\ell=
 \max_{\substack{K\text{ connected}\\1\leq |K|\leq\ell}}|A\cap K|.
\end{equation}
It allows connected
sets of size at most $\ell$ to include graphs with small connected
components. The maximum is zero for an empty graph. We use $G(H_Z)$
for data errors and $G_s$ for syndrome errors. The quantity is monotone
under inclusion and satisfies the triangle inequality for binary addition.

Throughout the proof, take
\begin{equation}
\label{eq:polynomial-proof-scale}
 \ell=\left\lfloor\min\left\{
 \kappa n^\beta,\frac{d_X}{2}\right\}\right\rfloor.
\end{equation}
The assumptions of the theorem ensure $\ell\geq8$ for all sufficiently
large $n$. We minimize $\|e+b\|_\ell$ over $b\in B^1$ when comparing
errors that differ by an $X$ stabilizer. This minimization preserves the
logical class and differs from the minimization over all errors with
the same syndrome in Definition~8 of
Ref.~\cite{quintavalleSingleShotErrorCorrection2021}.
Adding a single-qubit error changes this minimum by at most one.

\begin{lemma}
\label{lemma:polynomial-confinement}
If $\min_{b\in B^1}\|e+b\|_\ell\leq\ell/2$, then
\begin{equation}
\label{eq:polynomial-confinement}
 \rho\min_{b\in B^1}\|e+b\|_\ell
 \leq\|H_Ze\|_{2d_v\ell}.
\end{equation}
\end{lemma}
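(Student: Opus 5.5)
The plan is to fix a stabilizer-reduced representative and compare a single localized piece of it with a minimum-weight correction of that piece's syndrome. Let $m=\min_{b\in B^1}\|e+b\|_\ell\leq\ell/2$. Since $H_Z(e+b)=H_Ze$ for $b\in B^1$, we may replace $e$ by a minimizer $e'$. Among all minimizers we choose one of smallest Hamming weight; this secondary choice is what the exchange argument below needs. Fix a connected set $K$ in $G(H_Z)$ with $1\leq|K|\leq\ell$ and $|e'\cap K|=m$. Write $x=e'\cap K$ and $\sigma=H_Zx$. Let $N(K)$ be the set of checks touching $K$. It is connected in $G_s$ and has at most $d_v\ell\leq 2d_v\ell$ elements. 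Hence $|H_Ze'\cap N(K)|\leq\|H_Ze\|_{2d_v\ell}$, and it suffices to show $\rho m\leq|H_Ze'\cap N(K)|$.

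\textbf{Confinement on the localized piece.} The minimum-weight correction $c_{\rm min}(\sigma)$ has weight at most $|x|=m\leq\ell\leq\kappa n^\beta$. Therefore $(\kappa n^\beta,\rho)$-linear confinement gives $\rho|c_{\rm min}(\sigma)|\leq|\sigma|$. Next we would show that $|c_{\rm min}(\sigma)|=m$, i.e.\ that $x$ itself has minimum weight for its syndrome. Suppose instead $|c_{\rm min}(\sigma)|<m$. Then $x+c_{\rm min}(\sigma)\in\ker H_Z$ has weight below $2m\leq\ell\leq d_X/2<d_X$. So it is an $X$ stabilizer $b$, and $e''=e'+b$ has the same logical class and smaller Hamming weight. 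To contradict the choice of $e'$, we also need $\|e''\|_\ell\leq m$. Each connected component of $c_{\rm min}(\sigma)$ carries part of $\sigma\subseteq N(K)$, by the component argument of Lemma~\ref{lemma:UCC}. Hence $c_{\rm min}(\sigma)$ lies within distance one of $K$. For any connected $K''$ with $|K''|\leq\ell$, we would bound $|e''\cap K''|$ by distinguishing whether $K''$ meets this neighborhood of $K$. If it does, we trade the part of $K''$ inside the neighborhood against the part of $K$ inside $K''$, and then use the maximality of $m$.

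\textbf{Boundary step.} It remains to relate $\sigma=H_Zx$ to $H_Ze'\cap N(K)$. The two differ only on checks touching both $K$ and qubits of $e'\setminus K$ adjacent to $K$. To absorb these, we would replace $K$ by the connected set $K\cup(\text{adjacent qubits of }e')$, truncated to size $\ell$. This replacement is permitted because $m\leq\ell/2$ leaves room, and by maximality of $m$ at most $\ell/2$ additional error qubits can be added. We would then rerun the exchange argument on this enlarged set. The factor $2d_v\ell$ in the statement is intended to cover the check neighborhood of the enlarged set, whose size is at most $2\ell$.

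\textbf{Main obstacle.} The hardest step is the exchange argument: showing that swapping $x$ for $c_{\rm min}(\sigma)$ does not raise the closeness weight on some other window $K''$ that straddles $K$ and nearby parts of $e'$. If the direct case split fails, I would first try a different choice of $K$. Taking $K$ to be an inclusion-maximal connected set that realizes the maximum and is closed under adjacent error qubits would make $x$ a union of connected components of $e'$ in the sense of $\sqsubseteq$. The clean disjointness arguments of Lemma~\ref{lemma:UCC} would then apply directly.
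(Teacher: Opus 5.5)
\textbf{There is a genuine gap.} Your reduction to a single window $K$ leaves two steps unproved, and both fail for structural reasons rather than for lack of detail.

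\emph{The exchange argument.} It aims at an upper bound $\|e''\|_\ell\le m$. Maximality of $m$ only controls windows of at most $\ell$ qubits in $e'$. A window $K''$ that meets $c_{\rm min}(\sigma)$ and also contains errors of $e'\setminus x$ can only be compared with something like $K\cup K''$ together with the components of $c_{\rm min}(\sigma)$. That set can have far more than $\ell$ qubits, so no contradiction follows. Your supporting claim that $c_{\rm min}(\sigma)$ lies within distance one of $K$ is also false. Each component of $c_{\rm min}(\sigma)$ has nonzero syndrome inside $N(K)$, so it meets the one-neighborhood of $K$. But it can extend up to $|c_{\rm min}(\sigma)|-1$ steps beyond it, for example as a shortcut across a non-convex $K$.

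\emph{The boundary step.} $x=e'\cap K$ is a window truncation, not a union of connected components. Syndrome bits of $x$ in $N(K)$ can therefore be cancelled by error qubits of $e'$ just outside $K$, and $|H_Ze'\cap N(K)|$ can be smaller than $|\sigma|$. Enlarging $K$ and truncating to size $\ell$ does not repair this, for three reasons. The truncation can cut off adjacent error qubits again. Maximality at scale $\ell$ says nothing about how many error qubits the untruncated enlargement contains. And the enlarged error set has more than $m$ qubits, so your minimum-weight claim no longer covers it.

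\textbf{The missing idea.} The paper uses the minimality of $\omega=\min_{b\in B^1}\|e+b\|_\ell$ only in the easy direction.
\begin{itemize}
\item Every connected component $f_i$ of a minimizer $f$ has weight at most $\omega\le\ell/2$. A larger component would contain a connected set of $\omega+1\le\ell$ errors.
\item Replace each $f_i$ by a minimum-weight $g_i$ with $H_Zg_i=H_Zf_i$. Since $|f_i+g_i|\le\ell<d_X$, the sum $g=\sum_i g_i$ stays in $e+B^1$. Hence $\|g\|_\ell\ge\omega$ automatically, with no exchange argument.
\item By Lemma~\ref{lemma:UCC}, every component $u$ of every $g_i$ is minimum weight with $|u|\le\ell/2$, so confinement gives $|H_Zu|\ge\rho|u|$. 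All these component syndromes are pairwise disjoint and contained in $H_Ze$, so no cancellation can occur.
\item Take a window $K$ with $|g\cap K|\ge\omega$. Add components meeting $K$ until their total weight first reaches $\omega$. The result is a connected set of at most $2\ell$ qubits. Its at most $2d_v\ell$ incident checks are connected in $G_s$ and contain at least $\rho\omega$ bits of $H_Ze$.
\end{itemize}
Your fallback of closing $K$ under adjacent error qubits is in this spirit. But without the simultaneous replacement $f_i\mapsto g_i$, you still have no guarantee that those components are minimum weight, which is exactly what confinement requires.
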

\begin{proof}
Choose $f=e+b$, with $b\in B^1$, attaining the minimum, and write
$\omega=\|f\|_\ell$. The case $\omega=0$ is immediate. Every connected
component $f_i$ of $f$ has weight at most $\omega\leq\ell/2$;
otherwise it would contain a connected set of $\omega+1\leq\ell$
qubits, contradicting the definition of $\omega$.

For each $i$, choose a minimum-weight error $g_i$ with
$H_Zg_i=H_Zf_i$. Since $|g_i|\leq|f_i|$, the zero-syndrome error
$f_i+g_i$ has weight at most $2|f_i|\leq\ell<d_X$ and is therefore
an $X$ stabilizer. Thus $g=\sum_i g_i$ differs from $e$ by an
$X$ stabilizer, and $\|g\|_\ell\geq\omega$.

By the component properties in Lemma~\ref{lemma:UCC}, every connected
component $u$ of each minimum-weight error $g_i$ is itself a
minimum-weight representative of its syndrome. These component
syndromes are disjoint within each $g_i$. They are also disjoint for
different $i$, because the syndromes of the disconnected components
$f_i$ are disjoint. All of them are contained in $H_Ze$. Since
\begin{equation}
 |u|\leq\ell/2\leq\kappa n^\beta,
\end{equation}
each $u$ is a minimum-weight correction within the confinement range.
Definition~\ref{def:confinement} therefore gives
\begin{equation}
 |H_Zu|\geq\rho|u|.
\end{equation}

Choose a connected qubit set $K$ with $|K|\leq\ell$ and
$|g\cap K|\geq\omega$. The components $u$ meeting $K$ have total
weight at least $\omega$, including any overlap or cancellation in
their sum. Select such components until their total weight first
reaches $\omega$. Since each has weight at most $\ell/2$, the
selected weight is at most $\omega+\ell/2\leq\ell$.
Their union with $K$ is connected and contains at most $2\ell$ qubits.
The incident checks form a connected set of at most $2d_v\ell$
vertices in $G_s$ and contain all the selected component syndromes.
Since these syndromes are disjoint, this set contains at least
$\sum_u|H_Zu|\geq\rho\sum_u|u|\geq\rho\omega$ vertices of $H_Ze$.
This proves Eq.~\eqref{eq:polynomial-confinement}.
\end{proof}

\subsection{Syndrome and data decoders}

We construct the data decoder on the following set of valid syndromes:
\begin{equation}
\label{eq:polynomial-valid-syndromes}
 \mathcal V=\left\{\sigma\in\im H_Z:
 \|\sigma\|_{2d_v\ell}\leq\frac{\rho\ell}{8}\right\}.
\end{equation}
Join two syndromes in $\mathcal V$ by an edge labelled by qubit $j$
when their difference is $H_Ze_j$, where $e_j$ is the single-qubit
error at $j$. Self-loops and parallel edges retain their qubit labels.

\begin{lemma}
\label{lemma:polynomial-paths}
The sum of the qubit labels along every closed path in this graph
is an $X$ stabilizer.
\end{lemma}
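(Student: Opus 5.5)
The plan is to track the stabilizer-reduced closeness weight of the partial label sums along the path and show that it never becomes large. A zero-syndrome error whose closeness weight is small (after adding a stabilizer) must itself be a stabilizer, because its connected components are shorter than $d_X$.

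\emph{Setup.} Fix a closed path $\sigma_0,\sigma_1,\dots,\sigma_m=\sigma_0$ in the graph on $\mathcal V$, with qubit labels $j_1,\dots,j_m$. Define the partial sums $x_k=\sum_{i\le k}e_{j_i}$, so that $x_0=0$. By construction $H_Zx_k=\sigma_k+\sigma_0$. Since every $\sigma_k\in\mathcal V$, the triangle inequality for the closeness weight gives
\begin{equation}
 \|H_Zx_k\|_{2d_v\ell}\leq\|\sigma_k\|_{2d_v\ell}+\|\sigma_0\|_{2d_v\ell}\leq\frac{\rho\ell}{4}.
\end{equation}
Set $m_k=\min_{b\in B^1}\|x_k+b\|_\ell$. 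Then $m_0=0$, and $|m_{k}-m_{k-1}|\le 1$, because adding a single-qubit error changes this minimum by at most one.

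\emph{Step 1: the reduced weight stays small.} I claim $m_k\le\ell/4$ for every $k$, and prove it by induction. Suppose $m_{k-1}\le\ell/4$. Then $m_k\le\ell/4+1\le\ell/2$, using $\ell\geq8$. So Lemma~\ref{lemma:polynomial-confinement} applies to $x_k$ and gives
\begin{equation}
 \rho m_k\le\|H_Zx_k\|_{2d_v\ell}\le\rho\ell/4,
\end{equation}
hence $m_k\le\ell/4$. This discrete continuity step is the heart of the argument: the syndromes along the path stay inside $\mathcal V$, and the unit-step bound prevents the reduced weight from jumping past the range $\ell/2$ where confinement is available.

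\emph{Step 2: conclude at the end of the path.} At $k=m$ we have $H_Zx_m=0$ and $m_m\le\ell/4$. Choose $b\in B^1$ with $f=x_m+b$ satisfying $\|f\|_\ell\le\ell/4$. Every connected component of $f$ in $G(H_Z)$ then has weight at most $\ell/4$; otherwise it would contain a connected set of $\lfloor\ell/4\rfloor+1\le\ell$ qubits meeting $f$ in more than $\ell/4$ positions.

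Distinct components touch disjoint sets of checks, so each component has zero syndrome, since $H_Zf=0$. Each component therefore lies in $Z^1$ and has weight at most $\ell/4<d_X$, using $\ell\le d_X/2$. By the definition of $d_X$, each component lies in $B^1$. Hence $f\in B^1$, and so $x_m\in B^1$, which is the statement.

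The only delicate point is Step 1. The hypotheses of Lemma~\ref{lemma:polynomial-confinement} must be verified before its conclusion is used, which is why the induction carries the stronger bound $\ell/4$ and checks $\ell/4+1\le\ell/2$ at each step. Self-loops and parallel edges need no separate treatment: they are steps with $\sigma_k=\sigma_{k-1}$ or repeated labels, and the argument above covers them.
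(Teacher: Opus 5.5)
Your proof is correct and follows essentially the same route as the paper. Step 1 is the paper's argument: the paper phrases it as a contradiction at the first violating step, you phrase it as an induction, and both keep $\min_{b\in B^1}\|x_k+b\|_\ell\leq\ell/4$ via $\ell/4+1\leq\ell/2$ and Lemma~\ref{lemma:polynomial-confinement}. Step 2, which splits $f$ into zero-syndrome connected components of weight below $d_X$, is the direct form of the paper's observation that every nontrivial logical representative contains a component of weight at least $d_X$ and so has $\ell$-closeness equal to $\ell$.
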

\begin{proof}
Let $\sigma_0,\ldots,\sigma_m$ be a path, and let $u_j$ be the sum
of its first $j$ qubit labels, with $u_0=0$. Then
$H_Zu_j=\sigma_j+\sigma_0$, so
\begin{equation}
 \|H_Zu_j\|_{2d_v\ell}
 \leq\|\sigma_j\|_{2d_v\ell}+\|\sigma_0\|_{2d_v\ell}
 \leq\frac{\rho\ell}{4}.
\end{equation}
We claim that $\min_{b\in B^1}\|u_j+b\|_\ell\leq\ell/4$ throughout
the path. At the first step violating this bound, the minimum would
be at most $\ell/4+1\leq\ell/2$.
Lemma~\ref{lemma:polynomial-confinement} would then bound it by $\ell/4$,
a contradiction.

If the path is closed, $H_Zu_m=0$. As shown in
Appendix~\ref{app:threshold}, every representative of a nontrivial
logical $X$ operator contains a connected component of weight at least
$d_X$. Such a component contains a connected $\ell$-qubit set, so
every representative has $\ell$-closeness equal to $\ell$.
The bound just proved excludes this possibility for $u_m$.
Hence $u_m\in B^1$.
\end{proof}

For each connected component of the graph on $\mathcal V$, choose
a base syndrome and one preimage under $H_Z$. For another syndrome
in that component, add the qubit labels along a path from the base
syndrome. Lemma~\ref{lemma:polynomial-paths} ensures that the resulting
preimage is independent of the path modulo $B^1$. Choose one
representative of this class as $c(\sigma)$. In the component containing
zero, choose zero as the base syndrome and $c(0)=0$. For valid
syndromes outside $\mathcal V$, choose any preimage. Then
$H_Zc(\sigma)=\sigma$ for all valid syndromes, and every edge inside
$\mathcal V$ satisfies
\begin{equation}
\label{eq:polynomial-decoder-consistency}
 c(\sigma)+e_j+c(\sigma+H_Ze_j)\in B^1.
\end{equation}

For syndrome decoding, choose one representative of each coset of
$\im H_Z$ that minimizes the number of errors in connected sets:
\begin{equation}
\label{eq:polynomial-syndrome-decoder}
 r(s)\in\argmin_{s'\in C^2:\,s+s'\in\im H_Z}
 \|s'\|_{2d_v\ell}.
\end{equation}
The choice depends only on the coset, so $r(s+\sigma)=r(s)$ for every
valid syndrome $\sigma$. Equation~\eqref{eq:residual-syndrome} therefore
still gives $\Sigma=S_e+r(S_e)$. Since $S_e$ belongs to the same coset
as $r(S_e)$, the minimizing property implies
\begin{equation}
\label{eq:polynomial-residual-syndrome}
 \|\Sigma\|_{2d_v\ell}\leq2\|S_e\|_{2d_v\ell}.
\end{equation}
The final recovery uses the same data decoder $c$. Neither decoder
requires the noise distribution; no claim of classical efficiency is made.

\begin{lemma}
\label{lemma:polynomial-success}
The preparation channel followed by data noise and ideal recovery
has no logical error whenever
\begin{equation}
\label{eq:polynomial-good-noise}
 \|S_e\|_{2d_v\ell}\leq\frac{\rho\ell}{32},\qquad
 \|E\|_{2d_cd_v\ell}\leq\frac{\rho\ell}{16d_v}.
\end{equation}
\end{lemma}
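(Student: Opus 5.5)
The plan is to follow the total error step by step through the syndrome graph on $\mathcal V$. We show that every intermediate syndrome stays in $\mathcal V$. The edge consistency relation, Eq.~\eqref{eq:polynomial-decoder-consistency}, then telescopes to show that the final recovery cancels $F=c(\Sigma)+E$ up to an $X$ stabilizer.

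\emph{Step 1: the residual syndrome lies in $\mathcal V$.} By Eq.~\eqref{eq:polynomial-residual-syndrome} and the first condition in Eq.~\eqref{eq:polynomial-good-noise},
\begin{equation}
 \|\Sigma\|_{2d_v\ell}\leq 2\|S_e\|_{2d_v\ell}\leq\rho\ell/16 .
\end{equation}
Since $\Sigma\in\im H_Z$, this gives $\Sigma\in\mathcal V$, with half of the $\rho\ell/8$ budget still unused. The prepared error is $c(\Sigma)$, and after data noise the total error is $F=c(\Sigma)+E$. The ideal final recovery applies $c(H_ZF)=c(\Sigma+H_ZE)$.

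\emph{Step 2: the path stays in $\mathcal V$.} Enumerate the support of $E$ as qubits $j_1,\ldots,j_m$. Set $E_t=e_{j_1}+\cdots+e_{j_t}$ and $\sigma_t=\Sigma+H_ZE_t$. Consecutive $\sigma_t$ differ by a single-qubit syndrome $H_Ze_{j_t}$, so each pair is joined by an edge labelled $j_t$ as long as both endpoints lie in $\mathcal V$. By the triangle inequality for closeness weight, it suffices to show $\|H_ZE_t\|_{2d_v\ell}\leq\rho\ell/16$. Fix a connected check set $K$ in $G_s$ with $|K|\leq2d_v\ell$, and let $N(K)$ be the set of qubits incident to checks in $K$. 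Then $|N(K)|\leq 2d_cd_v\ell$. Moreover, $N(K)$ is connected in $G(H_Z)$: qubits in a common check are adjacent, and adjacent checks share a qubit, so any two qubits of $N(K)$ are joined by a path following a path in $K$. Every check in $K\cap H_ZE_t$ touches a qubit of $E_t\cap N(K)\subseteq E\cap N(K)$, and each qubit lies in at most $d_v$ checks. Hence
\begin{equation}
 |K\cap H_ZE_t|\leq d_v|E\cap N(K)|\leq d_v\|E\|_{2d_cd_v\ell}\leq\rho\ell/16
\end{equation}
by the second condition in Eq.~\eqref{eq:polynomial-good-noise}. Taking the maximum over $K$ gives $\|\sigma_t\|_{2d_v\ell}\leq\rho\ell/8$, so $\sigma_t\in\mathcal V$ for every $t$.

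\emph{Step 3: telescoping.} Summing Eq.~\eqref{eq:polynomial-decoder-consistency} along the path $\sigma_0,\ldots,\sigma_m$ gives
\begin{equation}
 c(\Sigma)+E+c(\Sigma+H_ZE)\in B^1 .
\end{equation}
This says exactly that $F+c(H_ZF)$ is an $X$ stabilizer, so no logical error occurs. The same path-independence guaranteed by Lemma~\ref{lemma:polynomial-paths} makes the result independent of the ordering of the qubits of $E$.

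The main obstacle is Step~2: the bound must hold for all partial sums $E_t$ and not just for $E$ itself, and it requires converting a connected check set into a connected qubit set of size $2d_cd_v\ell$. This conversion is where the scale $2d_cd_v\ell$ in the hypothesis is used. I would verify the connectivity of $N(K)$ carefully, since the graph conventions for $G_s$ and $G(H_Z)$ must match. The remaining steps are direct applications of Eq.~\eqref{eq:polynomial-residual-syndrome} and the construction of the data decoder $c$.
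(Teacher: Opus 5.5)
Your proof is correct and follows the paper's argument step for step. The paper likewise bounds $\|\Sigma+H_ZE'\|_{2d_v\ell}\leq 2\|S_e\|_{2d_v\ell}+d_v\|E\|_{2d_cd_v\ell}\leq\rho\ell/8$ for every $E'\subseteq E$, using the qubits incident on a connected check set, so that flipping the qubits of $E$ traces a path in $\mathcal V$, and then sums Eq.~\eqref{eq:polynomial-decoder-consistency} along that path; your explicit check that $N(K)$ is connected just fills in a detail the paper states without proof.
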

\begin{proof}
For every subset $E'\subseteq E$, we have
\begin{equation}
\label{eq:polynomial-data-syndrome}
 \|H_ZE'\|_{2d_v\ell}
 \leq d_v\|E\|_{2d_cd_v\ell}.
\end{equation}
Indeed, the qubits incident on a connected set of at most $2d_v\ell$
checks form a connected set of at most $2d_cd_v\ell$ qubits.
Each erroneous qubit contributes to at most $d_v$ of these checks.
Combining this bound with Eqs.~\eqref{eq:polynomial-residual-syndrome}
and~\eqref{eq:polynomial-good-noise} gives
\begin{equation}
 \|\Sigma+H_ZE'\|_{2d_v\ell}
 \leq2\|S_e\|_{2d_v\ell}+d_v\|E\|_{2d_cd_v\ell}
 \leq\frac{\rho\ell}{8}.
\end{equation}
Thus flipping the qubits of $E$ in any order gives a path whose
syndromes all belong to $\mathcal V$. Summing
Eq.~\eqref{eq:polynomial-decoder-consistency} along the path yields
\begin{equation}
 c(\Sigma)+E+c(\Sigma+H_ZE)\in B^1.
\end{equation}
This is the error remaining after recovery, since
$F=E+c(\Sigma)$ and $H_ZF=\Sigma+H_ZE$. It is an $X$ stabilizer,
so recovery succeeds. As in Appendix~\ref{app:threshold}, the same
conclusion holds for encoded states entangled with a reference system.
\end{proof}

\subsection{Logical failure probability}

It remains to bound the probability that
Eq.~\eqref{eq:polynomial-good-noise} fails. On a graph with $N$
vertices and maximum degree at most $D>1$,
Eq.~\eqref{eq:component-count} bounds the number of connected
$j$-vertex sets by $N(\e D)^{j-1}$. For a $p$-local stochastic
error $A$, a given such set contains more than $h>0$ errors with
probability at most $2^j p^h$. A union bound therefore gives, for
$0<p<1$ and a positive integer $m$,
\begin{equation}
\label{eq:polynomial-probability}
\begin{aligned}
 \Pr(\|A\|_m>h)
 &\leq\sum_{j=1}^m N(\e D)^{j-1}2^j p^h\\
 &\leq N(2\e D)^m p^h.
\end{aligned}
\end{equation}
The case $p=0$ follows from $A=0$ almost surely.

Let $\Delta_s=\max\{2,d_c(d_v-1)\}$ and set
\begin{equation}
\label{eq:polynomial-noise-bound}
\begin{aligned}
 a(p,q)=\max\bigl\{& (2\e\Delta)^{2d_cd_v}p^{\rho/(16d_v)},\\
                  & (2\e\Delta_s)^{2d_v}q^{\rho/32}\bigr\}.
\end{aligned}
\end{equation}
Applying Eq.~\eqref{eq:polynomial-probability} to the data and
readout errors, then using $r_Z\leq d_vn$ and
Lemma~\ref{lemma:polynomial-success}, gives
\begin{equation}
\label{eq:polynomial-explicit-bound}
 P_{\rm fail}(p,q)\leq(1+d_v)n\,a(p,q)^\ell.
\end{equation}
This step uses only the two local stochastic bounds and a union bound.

We now apply the distance argument of Appendix~\ref{app:threshold}.
If $d_X\geq c_d\log n$ for a constant $c_d>0$, then
$\kappa n^\beta\geq(c_d/2)\log n$ for all sufficiently large $n$.
Equation~\eqref{eq:polynomial-proof-scale} implies
$\ell\geq(c_d/2)\log n-1$. Hence, for $0<a(p,q)<\exp(-2/c_d)$,
\begin{equation}
\begin{aligned}
 P_{\rm fail}(p,q)
 \leq{}&(1+d_v)\exp(2/c_d)\\
 &\times\exp\!\left[-\ell\left(\log\frac{1}{a(p,q)}
                                  -\frac{2}{c_d}\right)\right].
\end{aligned}
\end{equation}
As in Eq.~\eqref{eq:logarithmic-distance-decay}, the bound decays
exponentially when the coefficient in the exponent is positive.
Since $a(p,q)\to0$ as $p,q\to0$, positive constants $p_0,q_0$
can be chosen so that the coefficient in the exponent is positive
for every $p<p_0$ and $q<q_0$. The case $a(p,q)=0$ is immediate.

If $d_X=\omega(\log n)$, then $\ell=\omega(\log n)$, and the
argument in Eq.~\eqref{eq:superlogarithmic-distance-decay} instead
requires only $a(p,q)<1$. This permits larger sufficient thresholds.
In both cases, $\ell=\Theta(\min\{n^\beta,d_X\})$, so the failure
probability is exponentially small in $\min\{n^\beta,d_X\}$ below
nonzero readout and data error thresholds. This proves
Eq.~\eqref{eq:polynomial-failure-bound} and part~2 of the theorem.
}

\section{An explicit qLDPC family with linear soundness}
\label{app:ramanujan}

We now describe a family satisfying the hypotheses of { part~1 of}
Theorem~\ref{thm:single-shot-preparation}. Fix a sufficiently large odd
prime power $Q$, and let $\mathcal B$ be the three-dimensional
Bruhat--Tits building of $\operatorname{PGL}_4$ over the corresponding
local field. Let $\Gamma_0$ denote the explicit Cartwright--Steger
lattice used in the LSV construction~\cite{lubotzkySamuelsVishneExplicit2005}.
Choose principal congruence subgroups $\Gamma_{I_j}$ such that
$|\Gamma_{I_j}\backslash\mathcal B|\to\infty$.
Propositions~5.4 and~5.5 and Theorem~5.6 of
Ref.~\cite{evraKaufmanZemorDecodable2022} then give intermediate subgroups
\begin{equation}
    \Gamma_{J_j}\subseteq\Gamma_j\subseteq\Gamma_{I_j}
\end{equation}
for which
\begin{equation}
\begin{aligned}
    X_j&=\Gamma_j\backslash\mathcal B,
    &X'_j&=\Gamma_{I_j}\backslash\mathcal B,\\
    H^1(X_j;\Ftwo)&\neq0,
    &|X_j(0)|&\leq|X'_j(0)|^2.
\end{aligned}
\end{equation}
Here $I_j$ and $J_j\subset I_j$ are the congruence ideals, and
$X_j(0)$ denotes the vertex set. The subgroup $\Gamma_j$ is the
inverse image of a Sylow-$2$ subgroup of
$\Gamma_{I_j}/\Gamma_{J_j}$. This construction is explicit, and
$X_j$ is a finite cover of $X'_j$.

We take the two-skeleton $Y_j=X_j^{(2)}$ and define the CSS code by
\begin{equation}
    H_Z=\partial_2^T=\delta_1,
    \qquad H_X=\partial_1.
\end{equation}
The qubits are edges, the $Z$ checks are triangles, and the $X$
checks are vertices. Since the local links are fixed, the check weights
and qubit degrees are bounded. Passing to the two-skeleton leaves the
maps determining $H^1$ and $H_1$ unchanged.

Theorem~5.7 of Ref.~\cite{evraKaufmanZemorDecodable2022} gives a
linear lower bound on the one-dimensional cosystole, while Lemma~5.9
and Theorem~5.10 of the same reference give a logarithmic lower bound
on the one-dimensional systole for these covers. Since all face counts
are comparable at fixed local degree, setting $n_j=|Y_j(1)|$ gives
\begin{equation}
\begin{aligned}
    k_j&=\dim H^1(Y_j;\Ftwo)>0,\\
    d_{X,j}&=S^1(Y_j)=\Omega(n_j),\\
    d_{Z,j}&=S_1(Y_j)=\Omega(\log n_j).
\end{aligned}
\end{equation}
Here $S^1(Y_j)$ and $S_1(Y_j)$ are the minimum weights of a
nontrivial one-cocycle and one-cycle, respectively. The distance
$d_{X,j}$ is the distance against physical Pauli $X$ errors detected
by $H_Z=\delta_1$.

It remains to verify soundness. Corollary~6.3 of
Ref.~\cite{evraKaufmanBoundedDegree2017} establishes degree-one
cosystolic expansion for the two-skeleton of a sufficiently thick
three-dimensional Ramanujan complex. Remark~6.5 extends this result
to the finite building quotients used here. Thus there is a constant
$\epsilon>0$, independent of $j$, such that
\begin{equation}
\label{eq:LSV-global-cosystolic}
    \|\delta_1a\|_2
    \geq\epsilon\min_{z\in Z^1(Y_j)}\|a+z\|_1
\end{equation}
for every $a\notin Z^1(Y_j)$, using the weighted simplicial norms
of that reference.

We next express this expansion bound in terms of the Hamming weights
used in the theorem. Let $m_j=|Y_j(2)|$, and let $r_-$ and $r_+$ be
uniform lower and upper bounds on the number of triangles containing
any edge. The fixed links imply $1\leq r_-\leq r_+<\infty$.
Counting incidences between edges and triangles then gives
\begin{equation}
    r_-n_j\leq3m_j\leq r_+n_j.
\end{equation}
For a one-cochain $u$ and a two-cochain $v$, the norms are
\begin{equation}
\begin{aligned}
    \|u\|_1
    &=\frac{1}{3m_j}\sum_{e\in\supp(u)}
    |\{\tau\in Y_j(2):e\subset\tau\}|,\\
    \|v\|_2&=\frac{|v|}{m_j}.
\end{aligned}
\end{equation}
For nonzero $\sigma=\delta_1a$, choose
$u_\star\in a+Z^1(Y_j)$ with minimum weighted norm. This affine
space consists precisely of the errors with syndrome $\sigma$.
The minimum-weight data decoder therefore satisfies
$|c(\sigma)|\leq|u_\star|$. Combining this inequality with
Eq.~\eqref{eq:LSV-global-cosystolic} gives
\begin{equation}
\begin{aligned}
    \frac{|\sigma|}{m_j}
    &\geq\epsilon\|u_\star\|_1
    \geq\frac{\epsilon r_-|u_\star|}{3m_j}\\
    &\geq\frac{\epsilon r_-|c(\sigma)|}{3m_j}.
\end{aligned}
\end{equation}
The same inequality holds at $\sigma=0$. Hence
\begin{equation}
    |c(\sigma)|\leq\frac{|\sigma|}{\rho},
    \qquad \rho=\frac{\epsilon r_-}{3}>0
\end{equation}
for every valid syndrome. In particular, $H_Z$ is
$(\kappa n_j,\rho)$-linearly sound with $\kappa=r_-/3$.
Its check degree is $d_c=3$, its qubit degree is at most $r_+$,
and $\Delta=2r_+$ gives a uniform bound on the degree of the adjacency
graph. Together with the distance estimates, these bounds show that this
qLDPC family has nonzero code dimension and satisfies the hypotheses of
Theorem~\ref{thm:single-shot-preparation}.

{

\section{Partition functions and logical-sector averages}
\label{app:mapping}

The disorder average in
Eq.~\eqref{eq:mapping-qtop-average} is evaluated with the
ensemble of Sec.~\ref{sec:numerics}. We combine the logical-sector sums into a single sum over qubit
variables. To obtain the unrestricted partition function, set
$\sigma=H_Zx$ and decompose the cocycle $x+c(\sigma)=b+l$. This defines
a bijection between $x\in C^1$ and triples $(\sigma,b,[l])$ in the
partition function of Eq.~\eqref{eq:mapping-joint-normalization}.
Under this bijection, the weight $Q(s+H_Zx)P\bigl(x+c_0(\sigma_\eta)\bigr)$
coincides with the summand of $2^{r_Z'}\Pr(s,\sigma_\eta,[l])$ in
Sec.~\ref{sec:smmapping}. The expression for the logical observable
follows from $O_u^{(c)}(x)=(-1)^{u\cdot[l]}$ for the corresponding
triple, which gives Eq.~\eqref{eq:mapping-microscopic-identity}.
Changing the representative $\bar z_u$ by an element of $B_1$ leaves
this observable unchanged, since $\ell_c(x)$ is a cocycle.

\begin{proof}[Proof of Proposition~\ref{prop:qtop-mld}]
For fixed $s$ and $\sigma_\eta$, the Parseval identity gives
\begin{equation}
\label{eq:mapping-parseval}
\begin{aligned}
    \sum_{[l]}P_{s,\sigma_\eta}([l])^2
    &=2^{-k}\sum_{u\in H_1}m_u(s,\sigma_\eta)^2\\
    &=\frac{1+(2^k-1)q_{\rm top}(s,\sigma_\eta)}{2^k},
\end{aligned}
\end{equation}
where $m_0=1$. Every probability distribution satisfies
\begin{equation}
    \sum_{[l]}P_{s,\sigma_\eta}([l])^2
    \leq\max_{[l]}P_{s,\sigma_\eta}([l])
    \leq\sqrt{\sum_{[l]}P_{s,\sigma_\eta}([l])^2}.
\end{equation}
The first inequality follows by bounding one factor in each squared
term by the largest probability. The second follows because the sum
contains the square of that largest probability. Averaging with
$\mathcal Z(s,\sigma_\eta)$ gives the lower bound in
Eq.~\eqref{eq:mapping-qtop-bound}, and concavity of the square root
gives the upper bound. If $[q_{\rm top}]_{\rm dis}\to1$, the lower
bound implies $P_{\rm MLD}\to1$. Conversely, combining the upper
bound with $k\geq1$ gives
\begin{equation}
    0\leq1-[q_{\rm top}]_{\rm dis}
    \leq\frac{2^k}{2^k-1}(1-P_{\rm MLD}^2)
    \leq2(1-P_{\rm MLD}^2),
\end{equation}
which proves the converse even when $k$ grows with the block size.
\end{proof}

To derive the Fourier representation and its normalization, we insert
the binary Fourier representation
\begin{equation}
    \delta(\sigma+H_Zx)
    =2^{-r_Z}\sum_{\tau\in C^2}
    (-1)^{\langle\tau,\sigma+H_Zx\rangle}
\end{equation}
into
$\sum_{\sigma,x}Q(s+\sigma)P\bigl(x+c_0(\sigma_\eta)\bigr)
\delta(\sigma+H_Zx)$. The independent binary sums satisfy
\begin{equation}
\begin{aligned}
    &\sum_{\sigma\in C^2}Q(s+\sigma)
    (-1)^{\langle\tau,\sigma\rangle}
    =(-1)^{\langle\tau,s\rangle}(1-2q)^{|\tau|},\\
    &\sum_{x\in C^1}P\bigl(x+c_0(\sigma_\eta)\bigr)
    (-1)^{\langle H_Z^T\tau,x\rangle}
    \\&=(-1)^{\langle H_Z^T\tau,c_0(\sigma_\eta)\rangle}
    (1-2p)^{|H_Z^T\tau|}.
\end{aligned}
\end{equation}
Using $\langle H_Z^T\tau,c_0(\sigma_\eta)\rangle
=\langle\tau,\sigma_\eta\rangle$ gives
\begin{equation}
\label{eq:mapping-fourier}
\begin{aligned}
 &\sum_{x\in C^1}Q(s+H_Zx)P\bigl(x+c_0(\sigma_\eta)\bigr)\\
 &\quad=2^{-r_Z}\sum_{\tau\in C^2}
 (-1)^{\langle\tau,s+\sigma_\eta\rangle}
 (1-2q)^{|\tau|}\\
 &\qquad\times(1-2p)^{|H_Z^T\tau|}.
\end{aligned}
\end{equation}
The left-hand side equals $2^{r_Z'}\mathcal Z(s,\sigma_\eta)$.

}

\section{Code constructions for the numerical comparison}
\label{app:code-families}

On the periodic cubic lattice, $H_X$ is the vertex--edge incidence
matrix and $H_Z$ is the plaquette--edge incidence matrix over $\Ftwo$.
Since each plaquette has an even number of edges incident on each
vertex, $H_XH_Z^T=0$. Each $Z$ check has weight four, and each edge
belongs to four plaquettes, giving $d_c=d_v=4$ for this syndrome map.
The $X$ checks have weight six. The code has three logical qubits:
logical $Z$ operators are noncontractible paths, and logical $X$
operators are noncontractible dual membranes~\cite{castelnovoTopologicalOrder2008}.

The six plaquettes bounding each cube form a check relation, and the
three independent noncontractible closed coordinate planes give additional
relations. A valid syndrome must satisfy all of them. In the dual
lattice, the cube constraints require even incidence at each vertex,
while the plane constraints remove the nontrivial homology of the
syndrome. A decoder that enforces only local cube parity can therefore
return a closed dual loop with nontrivial homology, which is not in
$\im H_Z$. The definition of $r$ in Eq.~\eqref{eq:syndrome-decoder}
enforces membership in the space of valid syndromes.

We use the cubic filling geometry to bound the minimum correction
weights for these valid syndromes. If the total syndrome length is
smaller than the periodic linear size, each cycle in a decomposition
of the dual syndrome lifts to a closed path in the infinite cubic
lattice. Exchanging neighboring coordinate steps fills the path with
elementary plaquettes until opposite steps cancel. The number of
plaquettes required is bounded by a constant times the square of the
path length. Summing over the cycle decomposition then bounds the total
filling area by a constant times the square of $|\sigma|$.
This regime therefore admits a soundness function $f$ with
$f(|\sigma|)=O(|\sigma|^2)$.

To see why a uniform linear bound fails, consider the boundary of a
square dual membrane much smaller than the periodic linear size.
Its syndrome weight scales with the side length, whereas any filling
requires an area of the order of the square of that length. Projection
onto the supporting coordinate plane gives the lower bound on area.
For a square small compared with the torus, adding a noncontractible
surface cannot reduce this bound. Taking larger squares as the system
grows therefore rules out a uniform linear filling bound. This argument
concerns the $Z$-syndrome map detecting membrane-like $X$ errors.

\bibliographystyle{apsrev4-2}
\bibliography{ref}

\end{document}